\title{Implications of Distance over Redistricting Maps: Central and Outlier Maps\footnote{Readers are encouraged to see the arxiv version with more results: https://arxiv.org/abs/2203.00872}}
\author{
    Seyed A. Esmaeili \textsuperscript{\rm 1}, 
    Darshan Chakrabarti\textsuperscript{\rm 2}, Hayley Grape\textsuperscript{\rm 3}, Brian Brubach\textsuperscript{\rm 3}
}
\DeclareMathOperator*{\argmin}{arg\,min}
\DeclareMathOperator{\E}{\mathbb{E}}
\newtheorem{theorem}{Theorem}[section]
\newtheorem{condition}{Condition}[section]
\newtheorem{lemma}{Lemma}
\newtheorem{remark}{Remark}
\DeclareMathOperator{\map}{\mathnormal{M}}
\DeclareMathOperator{\MapsDist}{\mathcal{M}}
\DeclareMathOperator{\Maps}{\mathcal{M}_\mathnormal{T}}
\DeclareMathOperator{\Dist}{\mathcal{D}}
\newcommand{\norm}[1]{\left\lVert#1\right\rVert}
\DeclareMathOperator{\dtheta}{\mathnormal{d_{\Theta}}}
\DeclareMathOperator{\dthetasq}{\mathnormal{d_{2,\Theta}}}
\DeclareMathOperator{\ac}{\mathnormal{\bar{A}_c}}
\DeclareMathOperator{\acpop}{\mathnormal{A_c}}
\DeclareMathOperator{\amed}{\mathnormal{\bar{A}^*}}
\DeclareMathOperator{\amedpop}{\mathnormal{A^*}}
\DeclareMathOperator{\acT}{\mathnormal{\bar{A}^{\Theta}_c}}
\DeclareMathOperator{\iid}{\textbf{iid}}
\DeclareMathOperator{\recom}{\textbf{ReCom}}
\newcommand{\mh}[1]{\mathnormal{\hat{M}_{#1}}}
\DeclareMathOperator{\RE}{\textbf{RE}}
\DeclareMathOperator{\ag}{\mathnormal{A_G}}
\begin{document}

\maketitle

\begin{abstract}
In representative democracy, a redistricting map is chosen to partition an electorate into districts which each elects a representative. A valid redistricting map must satisfy a collection of constraints such as being compact, contiguous, and of almost-equal population. However, these constraints are loose enough to enable an enormous ensemble of valid redistricting maps. This enables a partisan legislature to gerrymander by choosing a map which unfairly favors it. In this paper, we introduce an interpretable and tractable distance measure over redistricting maps which does not use election results and study its implications over the ensemble of redistricting maps. Specifically, we define a central map which may be considered "most typical" and give a rigorous justification for it by showing that it mirrors the Kemeny ranking in a scenario where we have a committee voting over a collection of redistricting maps to be drawn. We include running time and sample complexity analysis for our algorithms, including some negative results which hold using any algorithm. We further study outlier detection based on this distance measure and show that our framework can detect some gerrymandered maps. More precisely, we show some maps that are widely considered to be gerrymandered that lie very far away from our central maps in comparison to a large ensemble of valid redistricting maps. Since our distance measure does not rely on election results, this gives a significant advantage in gerrymandering detection which is lacking in all previous methods.
\end{abstract}

\section{Introduction}
Redistricting is the process of dividing an electorate into 
districts which each elect a representative. In the United States, this process is used for both federal and state-level representation, and we will use the U.S. House of Representatives as a running example. 
Subject to both state and federal law, the division of states into congressional districts is not arbitrary and must satisfy a collection of properties such as contiguity and near-equal population. Despite these regulations, 
redistricting is vulnerable to strategic manipulation in the form of gerrymandering. The body in charge 
can easily create a map within the legal constraints that leads to election results which favor a particular outcome (e.g., more representatives elected from one political party in the case of \emph{partisan gerrymandering}). In addition, the ability to draw gerrymandered districts has improved greatly with the aid of computers since the historic salamander-shaped district approved by Massachusetts Governor Elbridge Gerry in 1812. For example, assuming voting consistent with the 2016 election, the state of North Carolina with 13 representatives can be redistricted to elect either 3 Democrats and 10 republicans or 10 Democrats and 3 Republicans. 

However, despite this obvious threat to functioning democracy, partisan gerrymandering has often eluded regulation partly because it has been difficult to measure. In response, a recent line of impactful research introduced sampling techniques to 
randomly generate a large collection of redistricting maps\footnote{These are not the truly uniform random samples from the immense and ill-defined space of all possible maps that we ideally want, but they are generally treated as such in courts.} \cite{Chikina2017,deford2019recombination,herschlag2020quantifying} and calculate statistics such as a histogram of the number of seats won by each party using this collection. 
This can show that a proposed or enacted map is an outlier in terms of its election outcome with respect to the sample. 
In fact, these techniques were used as a key argument in a recent U.S. Supreme Court case on partisan gerrymandering~\cite{CaseRucho2019} and have supported successful efforts to change redistricting maps in state supreme court cases~\cite{LWVPenn2018}.
More importantly for the present work, at least two states, Michigan and Wisconsin, have used such a sampling tool~\cite{deford2019recombination} in the recent redistricting process in response to the 2020 U.S. Census ~\cite{TuftsRedistricting2021}. 
Recent papers have even extended these methods to locate which regions in a state are most unfairly impacted by a given map~\cite{lin2022auditing,ko2022all}.

Despite this progress, all of these methods rely on election outcomes to detect any possible gerrymandering. 
This is a problem in instances where citizens, courts, and/or legislative bodies request methods which do not take partisanship into account. 
However, there has not been an effective method for detecting gerrymandering by identifying outliers according to a non-partisan metric.  
\citet{abrishami2020geometry} make progress toward this goal, but has limitations such a using a small number of samples and not having a clear outlier score (See Section \ref{sec:rw} and Appendix \ref{app:comparison_Abri} for more details). 
In this paper, we introduce a framework that resolves these issues and demonstrates the first effective method for detecting gerrymandering based on a distance (dissimilarity) measure over redistricting maps. 

Furthermore, while progress has been made on the problem of detecting/labeling gerrymandering through the use of these sampling techniques, the question of drawing a redistricting map in a way that is ``fair'' and resists strategic manipulation remains largely unclear. We survey some existing proposals to automate redistricting in more detail in Section \ref{sec:rw}, but none of them have been adopted in practice thus far. 
Indeed, the issue of finding the ``ideal'' redistricting map is elusive and one of the main problems in redistricting and gerrymandering. We make progress in this direction, by introducing a novel method which selects the most ``central'' map. More specifically, given a collection of maps which are voted on by committee members we select the map with the minimum vote-weighted distance from the collection.  





\subsection{Our Contributions}
Our paper introduces a number of contributions: 
\begin{enumerate}
    \item \textbf{Distance over Redistricting Maps}: We introduce a tractable family of distance measures over redistricting maps which have a simple edit distance interpretation. This family of distance measures can be adjusted to accommodate considerations such as a population or path length between voting blocks (Subsection \ref{sec:distance_def}).
    
    \item \textbf{Medoid Map}: We introduce the \emph{medoid map} and show that it mirrors the Kemeny ranking in a setting where committee members vote over a collection of maps. We further characterize the complexity of finding this map and introduce heuristics for finding it (Sections \ref{sec:justification}, \ref{sec:algsandtheory}, \ref{sec:experiments}).
    
    \item \textbf{Centroid Map}: 
    We introduce the \emph{centroid map} which is not a valid map, but has interesting properties and implications. We provide algorithms for finding this map and characterize its sample complexity (Sections \ref{sec:algsandtheory},\ref{sec:experiments}).    
    
    \item \textbf{Gerrymandering Detection and Empirical Validation}: We show empirically that our framework can be used to detect gerrymandering in some instances. Further, we carry out extensive experimental validation of our pipeline where we ensure convergence and reproducibility by repeating the same experiment using different seeds. Remarkably, we reach the same conclusion across all runs: well-known gerrymandered maps of North Carolina and Pennsylvania have a distance in the 99th percentile away from the centroid map in comparison to an ensemble (Section \ref{sec:experiments} and Appendix \ref{app:exps}).   
\end{enumerate}
Furthermore, in Appendix \ref{app:gerry_detect} we include more details about gerrymandering detection such as the interpretation of having a large distance from the centroid map and a more detailed comparison to the work of \cite{abrishami2020geometry}. Finally, we note that our framework can adapt to various considerations such the Voting Rights Act (VRA) \cite{bickel1966voting} and other state-specific redistricting rules by simply modifying the sampling method to adapt to these considerations. Due to the space limits, we delay all proofs to Appendix \ref{app:math}.

\section{Related Work}\label{sec:rw}
Less than a decade ago, several early works ushered in the current era of Markov Chain Monte Carlo (MCMC) sampling techniques for gerrymandering detection~\cite{mattingly2014redistricting,wu2015impartial,fifield2015new}. Followup work has both refined these techniques and further analyzed their ability to approximate the target distribution. Authors of these works have been involved in court cases in Pennsylvania~\cite{Chikina2017} and North Carolina~\cite{herschlag2020quantifying} with sampling approaches being used to demonstrate that existing maps were outliers as evidence of partisan gerrymandering. 
One of the most recent works in this area introduces the $\recom$ tool~\cite{deford2019recombination} which was used by the Wisconsin People’s Maps Commission and the Michigan Independent Citizens Redistricting Commission in the current redistricting cycle following the 2020 U.S. census~\cite{TuftsRedistricting2021}. More recently, the works of \cite{lin2022auditing,ko2022all} have made an interesting extension of the previous methods by identifying the voting blocks unfairly impacted in a gerrymandered map.  Generally, these techniques have primarily been used to analyze and sometimes reject existing maps rather than draw new maps. However, we may view them as narrowing the search space of maps that can be drawn. Furthermore, it has been shown that even the regulation of gerrymandering via outlier detection is subject to strategic manipulation~\cite{brubach2020meddling}.

On the automated redistricting side, many map drawing algorithms have favored optimization approaches and in particular, optimizing some notion of compactness while avoiding explicit use of partisan information. Approaches emphasizing compactness include balanced power diagrams~\cite{cohen2018balanced}, a $k$-median-based objective~\cite{Atlas538}, and minimizing the number of cut edges ~\cite{hettle2021balanced}. Some works include partisan information for the sake of creating competitive districts (districts with narrow margins between the two main parties). The PEAR tool~\cite{Liu2016} balances nonpartisan criteria like compactness (defined by the Polsby-Popper score~\cite{polsby1991third}) with other criteria such as competitiveness and uses an evolutionary algorithm with some similarity to the random walks taken by MCMC sampling approaches. Other works go further in the explicitly partisan direction such as the game theoretic approach of \citet{Pegden2017cake} which seeks a map that is fair to two parties. Finally, there are methods which prefer simplicity such as the Splitline~\cite{RangeVotingSplitline} algorithm which iteratively splits a state until the desired number of districts is reached.


In all of these approaches, the aim is to automate redistricting, but it is difficult to determine whether the choices made are the ``right'' or ``fairest'' decisions. The question of whether optimizing properties such as compactness while ignoring partisan factors could result in partisan bias is a concern. \citet{Cho2019technology} notes a comment by Justice Scalia suggesting that such a process could be biased against Democratic voters clustered in cities in Vieth v. Jubelirer~\cite{CaseVieth2004}. For those that do take partisan bias into account, there are questions of whether purposely drawing competitive districts or giving a fair allocation to two parties are really beneficial to voters.

Finally, the work of \citet{abrishami2020geometry} introduces a distance measure over redistricting maps. However, our distance is easy to compute and does not require solving a linear program. Further, our focus is on the implications of having a distance measure, i.e. the medoid and centroid maps that will be introduced. Moreover, unlike \citet{abrishami2020geometry} we can detect gerrymandered maps rigorously by specifying where they lie on a distance histogram without using an embedding method and using $200{,}000$ samples instead of only $100$. Finally, we give a clear outlier score for a given map (its percentile distance from the central map) whereas \cite{abrishami2020geometry} cannot do that. Therefore, it is difficult to see how their methods would be applied in a real practical setting. See Appendix \ref{app:comparison_Abri} for a more detailed discussion.

\section{Problem Set-Up} \label{sec:problemsetup}
\begin{figure*}[t!]
  \centering
  \includegraphics[scale=0.33]{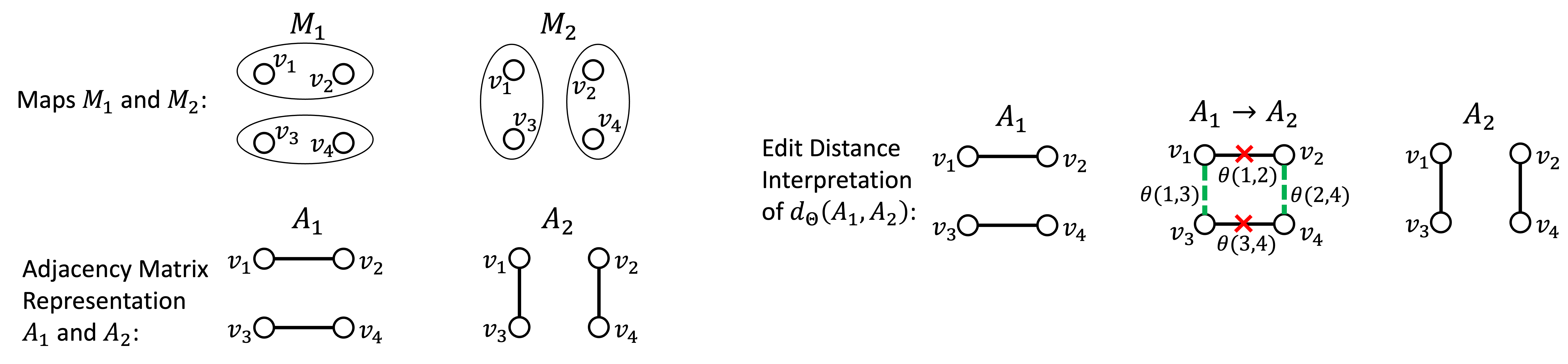}
  \caption{We are given a hypothetical state consisting of 4 vertices $V=\{v_1,v_2,v_3,v_4\}$ with $M_1$ and $M_2$ being two valid redistricting maps. The adjacency matrices $A_1$,$A_2$, and edit distance interpretation of $\dtheta(A_1,A_2)$ are demonstrated. Note that $\dtheta(A_1,A_2)=\theta(1,2)+\theta(3,4)+\theta(1,3)+\theta(2,4)$ which is exactly the minimum sum of edge weights that need to be deleted and added to obtain $A_2$ from $A_1$.}
  \label{fig:ed_fig}
\end{figure*}
A given state is modelled by a graph $G=(V,E)$ where each vertex $v \in V$ represents a voting block (\emph{unit}). Each unit $v$ has a weight $w(v)>0$ which represents its population. Further, $\forall u,v \in V$ there is an edge $e=(u,v) \in E$ if and only if the two vertices are \emph{connected} (geographically this means that units $u$ and $v$ share a boundary). The number of units is $|V|=n$. A \emph{redistricting} (\emph{redistricting map} or simply \emph{map}) $M$ is a partition of $V$ into $k>0$ many districts, i.e., $V=V_1 \cup V_2 \dots \cup V_k$ where each $V_i$ represents a district and $\forall i \in [k], |V_i|\neq 0$ and $\forall i,j \in [k], V_i \cap V_j= \emptyset \ \text{if} \ i \neq j$. The redistricting map $M$ is decided by the induced partition, i.e., $M=\{V_1,\dots,V_k\}$. For a redistricting $M$ to be considered valid, it must satisfy a collection of properties, some of which are specific to the given state. We use the most common properties as stated in  \cite{deford2019recombination,hettle2021balanced}: \textbf{(1)} \emph{Compactness}: The given partitioning should have ``compact'' districts. Although there is no definitive mathematical criterion which decides compactness for districts, some have used common definitions such as Polsby-Popper or Reock Score \cite{alexeev2018impossibility}. Others have used a clustering criterion like the $k$-median objective \cite{cohen2018balanced} or considered the total number of cuts (number of edges between vertices in different districts) \cite{deford2019recombination}. \textbf{(2)} \emph{Equal Population}: To satisfy the desideratum of ``one person one vote'' each district should have approximately the same number of individuals. I.e., a given district $V_i$ should satisfy $\sum_{v \in V_i} w(v) \in [(1-\epsilon)\frac{\sum_{v\in V}w(v)}{k},(1+\epsilon)\frac{\sum_{v\in V}w(v)}{k}]$ where $\epsilon$ is a non-negative parameter relaxing the equal population constraint. \textbf{(3)} \emph{Contiguity}: Each district (partition) $V_i$ should be a connected component, i.e., $\forall i \in [k]$ and $\forall u,v \in V_i$, $v$ should be reachable from $u$ through vertices which only belong to $V_i$.  
%
%
Our proofs do not rely on these properties and therefore can accommodate additional properties if desired. 

Let $\MapsDist$ be the set of all valid maps. Let $\Dist(\MapsDist)$ be a distribution over these maps. Furthermore, define a distance function over the maps $d: \MapsDist \times \MapsDist \rightarrow [0,\infty)$. Then the \emph{population medoid map} is $M^*$ which is a solution to the following: 
\begin{align}\label{pop_med}
    M^* = \argmin\limits_{\map \in \MapsDist} \ \underset{\map' \sim \Dist(\MapsDist)}{\E}[d(\map,\map')]
\end{align}
In words, the population medoid map is a valid map minimizing the expected sum of distances away from all valid maps according to the distribution $\Dist(\MapsDist)$. This serves as a natural way to define a central or most typical map with respect to a given distance metric of interest. 

Since we clearly operate over a sample (a finite collection) from $\Dist(\MapsDist)$; therefore, we assume that the following condition holds: 
\begin{condition}\label{samp_assump}
We can sample maps from the distribution $\Dist(\MapsDist)$ in an independent and identically distributed ($\iid$) manner in polynomial time. 
\end{condition}
We note that although independence certainly does not hold over the sampling methods of \cite{deford2019recombination,mattingly2014redistricting} since they use MCMC methods, it makes the derivations significantly more tractable. Further, the specific choice of the sampling technique is somewhat immaterial to our objective. 

Based on the above condition, we can sample from the distribution $\MapsDist$ efficiently and obtain a finite set of maps $\Maps$ having $T$ many maps, i.e., $|\Maps|=T$.

Now, we define the \emph{sample medoid}, which is simply the extension of the population medoid, but restricted to the given sample. This leads to the following definition:
\begin{align}\label{samp_med}
    \bar{M}^* = \argmin\limits_{\map \in \Maps} \sum_{M' \in \Maps} \ d(\map,\map')
\end{align}

\subsection{Distance over Redistricting Maps}\label{sec:distance_def}
Before we introduce our distance measure, we note that a given map (partition) $M$ can be represented using an ``adjacency'' matrix $A$ in which $A(i,j)=1$ if and only if $\exists V_{\ell} \in M: i,j \in V_{\ell}$ otherwise $A(i,j)=0$. We note that this adjacency matrix can be seen as drawing an edge between every two vertices $i,j$ that are in the same district, i.e., where $A(i,j)=1$. It is clear that we can refer to a map by the partition $M$ or the induced adjacency matrix $A$. Accordingly, we refer to the population medoid as $M^*$ or $\amedpop$ and the sample medoid as $\bar{M}^*$ or $\amed$ 

We now introduce our distance family which is parametrized by a weight matrix $\Theta$ and have the following form:
\begin{align}\label{metric_def}
   \dtheta(A_1,A_2) = \frac{1}{2} \sum_{i,j \in V} \theta(i,j) |A_1(i,j)-A_2(i,j)|
\end{align}
where we only require that $\theta(i,j)>0, \forall i,j \in V$ where $\theta(i,j)$ is the $(i,j)$ entry of $\Theta$. For the simple case where $\theta(i,j)=1, \forall i,j \in V$, our distance $d_{\textbf{1}} (A_1,A_2)$ is equivalent to a Hamming distance over adjacency matrices. When $\theta(i,j)=1, \forall i,j \in V$, we refer to the metric as the \emph{unweighted distance}. We note that such a distance measure was used in previous work that considered adversarial attacks on clustering \cite{chhabra2020suspicion,cina2022black}. 

Another choice of $\Theta$ that leads to a meaningful metric is the \emph{population-weighted distance} where $\theta(i,j)=w(i)w(j)$. This leads to $d_W(A_1,A_2)=\frac{1}{2} \sum_{i,j \in V} w(i)w(j) |A_1(i,j)-A_2(i,j)|$. The population-weighted distance takes into account the number of individuals being separated from one another when vertices $i$ and $j$ are separated from one another\footnote{Recall that each vertex (unit) is a voting block (AKA voter tabulation district) and units may contain different numbers of voters.} by assigning a cost of $w(i)w(j)$. By contrast, the unweighted distance assigns the same cost regardless of the population values and thus has a uniform weight over the separation of units immaterial of the populations which they include.

Another choice of metric which is meaningful, could be of the form $\theta(i,j)=f(l(i,j))$ where $l(i,j)$ is the length of a shortest path between $i$ and $j$ and $f(.)$ is a positive decreasing function such as $f(l(i,j))=e^{-l(i,j)}$. Such a metric would place a smaller penalty for separating vertices that are far away from each other. In general, our distance has an edit distance interpretation. Specifically, if we were to draw edges between vertices according to the entries with $1$ in the adjacency matrix, then given $A_1$ and $A_2$, the distance $\dtheta(A_1,A_2)$ simply equals the minimum total weight (according to $\theta(i,j)$) of the edges that must be added and deleted to obtain $A_2$ from $A_1$. In the case of the unweighted distance, it is precisely equal to the minimum number of edges that have to be deleted from and added to $A_1$ to obtain $A_2$. See Figure \ref{fig:ed_fig} for an illustration.

\section{Justification for Choosing a Central Map}\label{sec:justification}
\paragraph{\textbf{Connection to the Kemeny Rule}:}
We note that the Kemeny rule \cite{kemeny1959mathematics,brandt2016handbook} is the main inspiration behind our proposed framework. More specifically, given a set of alternatives and individuals voting by ranking the alternatives, the Kemeny rule provides a method for aggregating the resulting collection of rankings. This is done by introducing a distance measure over rankings (the Kendall tau distance \cite{kendall1938new}) and choosing the ranking which minimizes the sum of distances away from the other rankings in the collection as the aggregate ranking. 

Although we do not deal with rankings here, we follow a similar approach to the Kemeny rule as we introduce a distance measure over redistricting maps and choose the map which minimizes the sum of the distances as the aggregate map. In fact, recently there has been significant citizen engagement in drawing redistricting maps. For example, in the state of Maryland an executive order from the governor has established a web page to collect citizen submissions of redistricting maps~\cite{MDProposalsProcess}. If each member of a committee was to vote for exactly one map in the given submitted maps, then if we interpret the probability $p_{\map'}$ for a map $\map' \in \MapsDist$ to be the number of votes it received from the total set of votes, then the medoid map $M^*$ (similar to the Kemeny ranking) would be the map which minimizes the weighted sum of distances from the set of maps voted on. We include this result as a proposition and its proof follows directly from the definition we gave above: 
\begin{restatable}{proposition}{medoidkemeny} \label{th:medoid_kemeny}
Suppose we have a committee of $\mathcal{T}$ many voters and that each voter votes for one map from a subset of all possible valid maps $\MapsDist$, then given a map $\map'$, if we assign it a probability $p_{\map'} = \frac{\sum_{\tau =1}^{\mathcal{T}} \nu_{\tau,\map'}}{\mathcal{T}}$ where $\nu_{\tau,\map'} \in \{0,1\}$ is the vote of member $\tau$ for map $\map'$, then the medoid map $M^* = \argmin\limits_{\map \in \MapsDist} \ \underset{\map' \sim p_{\map'}}{\E}[d(\map,\map')]$ is the map that minimizes the sum of distances from the set of valid maps where the distance to each map is weighted by the total votes it receives. 
\end{restatable}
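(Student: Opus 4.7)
The plan is essentially to unroll the definition of expectation and observe that the claim is definitional once the correct bookkeeping is in place. The work is to (i) verify that $p_{\map'}$ is a bona fide probability distribution over $\MapsDist$, (ii) expand $\E_{\map' \sim p_{\map'}}[d(\map,\map')]$ as an explicit sum, and (iii) rearrange to isolate the vote totals as the weights on the distances.

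First I would check that the $p_{\map'}$ form a valid distribution. Since each voter $\tau$ casts exactly one vote, we have $\sum_{\map' \in \MapsDist} v_{\tau,\map'} = 1$ for every $\tau$, so
\begin{align*}
\sum_{\map' \in \MapsDist} p_{\map'} \;=\; \sum_{\map' \in \MapsDist} \frac{\sum_{\tau=1}^{\mathcal{T}} v_{\tau,\map'}}{\mathcal{T}} \;=\; \frac{1}{\mathcal{T}}\sum_{\tau=1}^{\mathcal{T}} \sum_{\map' \in \MapsDist} v_{\tau,\map'} \;=\; \frac{\mathcal{T}}{\mathcal{T}} \;=\; 1,
\end{align*}
and by construction $p_{\map'} \ge 0$. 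This justifies writing the expectation in (\ref{pop_med}) under this distribution.

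Next I would write the expectation as a discrete sum and substitute the definition of $p_{\map'}$:
\begin{align*}
\underset{\map' \sim p_{\map'}}{\E}\bigl[d(\map,\map')\bigr]
\;=\; \sum_{\map' \in \MapsDist} p_{\map'}\, d(\map,\map')
\;=\; \frac{1}{\mathcal{T}} \sum_{\map' \in \MapsDist} \left(\sum_{\tau=1}^{\mathcal{T}} v_{\tau,\map'}\right) d(\map,\map').
\end{align*}
The coefficient $\sum_{\tau=1}^{\mathcal{T}} v_{\tau,\map'}$ is exactly the total number of votes received by $\map'$, so the expectation equals (up to the global factor $1/\mathcal{T}$) the sum of distances from $\map$ to the voted-on maps, each weighted by its vote total.

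Finally I would note that the positive constant $1/\mathcal{T}$ has no effect on the argmin, so
\begin{align*}
M^* \;=\; \argmin_{\map \in \MapsDist} \; \underset{\map' \sim p_{\map'}}{\E}\bigl[d(\map,\map')\bigr] \;=\; \argmin_{\map \in \MapsDist} \; \sum_{\map' \in \MapsDist}\left(\sum_{\tau=1}^{\mathcal{T}} v_{\tau,\map'}\right) d(\map,\map'),
\end{align*}
which is precisely the claim. There is no real obstacle here; the only thing worth being careful about is the convention that each voter votes for exactly one map (so that the weights sum to $\mathcal{T}$ and $p_{\map'}$ is a probability), which the statement makes explicit.
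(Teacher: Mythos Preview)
Your proof is correct and follows essentially the same approach as the paper's: both unfold the expectation as a sum, identify the vote totals as the weights, and observe that the constant $1/\mathcal{T}$ does not affect the $\argmin$. The paper runs the chain of equalities in the reverse direction (starting from the vote-weighted sum and ending at the expectation) and omits the verification that $p_{\map'}$ is a valid probability distribution, but the substance is identical.
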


\paragraph{\textbf{Connection to Distance and Clustering Based Outlier Detection}:}
The medoid map, by virtue of minimizing the sum of distances, can be considered a central map. Accordingly, one may consider using the medoid map to test for gerrymandering in a manner similar to distance and clustering based outlier detection \cite{he2003discovering,knox1998algorithms}. More specifically, given a large ensemble of maps, if the enacted map is far from the medoid\footnote{Our experiments use the centroid instead of the medoid map.} in comparison to the ensemble then this suggests possible gerrymandering. In fact, we carry experiments on the states of North Carolina and Pennsylvania (both of which have had enacted maps which were considered gerrymandered) and we indeed find the gerrymandered maps to be faraway whereas the remedial maps are much closer in terms of distance.






\section{Algorithms}\label{sec:algsandtheory}
We show our linear time algorithm for obtaining the sample medoid in Subsection \ref{sec:samp_medoid}. In Subsection \ref{sec:pop_centoid}, we define the population centroid, derive sample complexity guarantees for obtaining it, and show that its $(i,j)$ entry equals the probability of having $i$ and $j$ in the same district. Finally, in Subsection \ref{sec:pop_medoid}, we discuss obtaining the population medoid and show that in general an arbitrarily large sample is not sufficient to approximate it.

Before we introduce our algorithms, we show that our distance family is a metric (satisfies the properties of a metric):
\begin{restatable}{proposition}{distismetric} \label{th:dist_is_metric}
For all $\Theta$ such that $\forall i,j, \theta(i,j)>0$, the following distance function is a metric.
\[\dtheta(A_1,A_2)=\frac{1}{2}\sum_{i,j \in V} \theta(i,j)|A_1(i,j)-A_2(i,j)|
\]  
\end{restatable}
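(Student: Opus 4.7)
The plan is to verify the four metric axioms directly from the definition, treating $d_\Theta$ as a positively weighted Hamming distance on the entries of the adjacency matrices. Let me sketch each of the four checks.

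First, I would dispatch non-negativity and symmetry by inspection: every summand $\theta(i,j)|A_1(i,j)-A_2(i,j)|$ is a product of a strictly positive weight and a nonnegative absolute value, and the absolute value is invariant under swapping $A_1$ and $A_2$. For identity of indiscernibles, one direction is immediate: if $A_1=A_2$, every summand vanishes. For the converse, I would use the strict positivity hypothesis $\theta(i,j)>0$: if $d_\Theta(A_1,A_2)=0$, then since the sum consists of nonnegative terms, each term must vanish, and dividing by $\theta(i,j)>0$ forces $|A_1(i,j)-A_2(i,j)|=0$ for every pair $(i,j)$, hence $A_1=A_2$ as matrices. This is the one spot where the hypothesis that all $\theta(i,j)$ are strictly positive (rather than merely nonnegative) is essential, so I would highlight it.

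For the triangle inequality, I would reduce it to an entrywise application of the scalar triangle inequality $|a-c|\le |a-b|+|b-c|$. Concretely, for any three adjacency matrices $A_1,A_2,A_3$ and any pair $(i,j)$,
\[
\theta(i,j)\,|A_1(i,j)-A_3(i,j)| \;\le\; \theta(i,j)\,|A_1(i,j)-A_2(i,j)| \;+\; \theta(i,j)\,|A_2(i,j)-A_3(i,j)|,
\]
since $\theta(i,j)>0$ preserves the inequality. Summing over all $i,j\in V$ and multiplying by $\tfrac12$ gives $d_\Theta(A_1,A_3)\le d_\Theta(A_1,A_2)+d_\Theta(A_2,A_3)$.

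Honestly, I don't foresee any genuine obstacle here; the statement is a standard fact that the $\ell_1$-style weighted distance between vectors (with strictly positive weights) is a metric, applied to the vectorization of the adjacency matrices. The only subtlety worth flagging is the role of strict positivity of $\Theta$ in identity of indiscernibles, since otherwise two distinct maps that disagree only on pairs with $\theta(i,j)=0$ would have distance zero. The rest is bookkeeping.
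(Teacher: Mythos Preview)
Your proposal is correct and matches the paper's own proof essentially line for line: both dispatch non-negativity, symmetry, and identity of indiscernibles by appealing to strict positivity of the weights, and both obtain the triangle inequality by applying the scalar triangle inequality entrywise and summing. If anything, your write-up is slightly more explicit about why strict positivity is needed for identity of indiscernibles.
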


\subsection{Obtaining the Sample Medoid}\label{sec:samp_medoid}
We note that in general obtaining the sample medoid is not scalable since it usually takes quadratic time \cite{newling2017sub} in the number of samples, i.e. $\Omega(T^2)$. An $O(T^2)$ run time can be easily obtained through a brute-force algorithm which for every map calculates the sum of the distances from other maps and then selects the map with the minimum sum. However, for our family of distances $\dtheta(.,.)$ we show that the medoid map is the closest map to the centroid map (defined below) and show a simple algorithm that runs in $O(T)$ time for obtaining the sample medoid. The fundamental cause behind this speed up is an equivalence between the Hamming distance over binary vectors and the square of the Euclidean distance which is still maintained with our generalized distance. Before introducing the theorem we define $\dthetasq(A_1,A_2)=\frac{1}{2}\sum_{i,j\in V} \theta(i,j)(A_1(i,j)-A_2(i,j))^2$ where the absolute has been replaced by a square. Now we state the decomposition theorem: 
\begin{restatable}{theorem}{decompth} \label{th:decomp_th}
Given a collection of redistricting maps $A_1,\dots,A_T$, the sum of distances of the maps from a fixed redistricting map $A'$ equals the following:
\begin{align}\label{main_decomp_eq}
    \sum_{t=1}^T \dtheta(A_t,A') & =  \sum_{t=1}^T \dthetasq(A_t,\bar{A}_c) + T  \dthetasq(\bar{A}_c,A')
\end{align}
where $\bar{A}_c=\frac{1}{T}\sum_{t=1}A_t$. 
\end{restatable}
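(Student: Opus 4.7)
The plan is to reduce the identity to a pairwise, scalar statement that follows from the classical sum-of-squares decomposition (parallel axis theorem), exploiting the binary nature of the adjacency matrices to bridge the absolute-value and squared versions of the distance.

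First I would observe the key fact that makes $\dtheta$ and $\dthetasq$ coincide on valid maps: for any two maps $A_t$ and $A'$, every entry $A_t(i,j), A'(i,j) \in \{0,1\}$, so
\[
|A_t(i,j) - A'(i,j)| \;=\; \bigl(A_t(i,j) - A'(i,j)\bigr)^2 .
\]
Multiplying by $\theta(i,j)/2$ and summing over $(i,j)$ gives $\dtheta(A_t,A') = \dthetasq(A_t,A')$, so the LHS of \eqref{main_decomp_eq} equals $\sum_{t=1}^T \dthetasq(A_t,A')$. Notice we cannot simply replace $\dthetasq(\cdot,\bar A_c)$ with $\dtheta(\cdot,\bar A_c)$ on the RHS, because $\bar A_c$ has entries in $[0,1]$ rather than $\{0,1\}$; this is precisely why the squared version $\dthetasq$ must appear on the right.

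Next I would prove the identity entrywise. Fix a pair $(i,j)$ and write $x_t := A_t(i,j)$, $y := A'(i,j)$, and $\bar x := \bar A_c(i,j) = \frac{1}{T}\sum_t x_t$. The standard parallel axis / bias-variance identity gives
\[
\sum_{t=1}^T (x_t - y)^2 \;=\; \sum_{t=1}^T (x_t - \bar x)^2 + T(\bar x - y)^2,
\]
because the cross term $2(\bar x - y)\sum_t (x_t - \bar x)$ vanishes by the definition of $\bar x$. Multiplying by $\theta(i,j)/2$ and summing over $(i,j) \in V \times V$ (which is legal since all quantities are finite and $\theta(i,j)>0$) yields exactly
\[
\sum_{t=1}^T \dthetasq(A_t,A') \;=\; \sum_{t=1}^T \dthetasq(A_t,\bar A_c) + T\,\dthetasq(\bar A_c,A').
\]
Combined with the first step, this is the claimed decomposition.

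The main obstacle is really just conceptual clarity about which distance ($\dtheta$ vs.\ $\dthetasq$) appears where, and tracking the fact that the binarization trick in step one applies only to comparisons between two genuine maps, not to comparisons involving $\bar A_c$. Once that is pinned down, the remaining argument is a routine application of the parallel axis theorem performed coordinatewise and then reassembled by linearity of the weighted sum defining $\dthetasq$. No additional assumptions on $\Theta$ beyond positivity are needed, and the proof works verbatim for any collection of $\{0,1\}$-valued matrices, not just valid redistricting maps.
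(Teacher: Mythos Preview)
Your proof is correct and follows essentially the same route as the paper: first use the binarity of valid maps to replace $\dtheta$ by $\dthetasq$ on the left, then apply the bias--variance (parallel axis) identity so the cross term vanishes by definition of the mean. The only cosmetic difference is that the paper absorbs the weights via $A^\Theta(i,j)=\sqrt{\theta(i,j)}\,A(i,j)$ and performs the decomposition once in vectorized $\ell_2$ form, whereas you carry out the identical computation coordinatewise and then sum; neither presentation buys anything the other lacks.
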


Notice that the above theorem introduces the centroid map $\ac$ which is simply equal to the empirical mean of the adjacency maps. It should be clear that with the exception of trivial cases the centroid map $\ac$ is not a valid adjacency matrix, since despite being symmetric it would have fractional entries between $0$ and $1$. Hence, the centroid map also does not lead to a valid partition or districting. Moreover, we note that it is more accurate to call $\ac$ the sample centroid, as opposed to the population centroid $A_c$ (see Subsection \ref{sec:pop_centoid}) which we would obtain with an infinite number of samples. 




\begin{algorithm}[t]
	\begin{algorithmic}
		\STATE{Input: $\Maps=\{A_1,\dots,A_T\}$, $\Theta=\{ \theta(i,j)>0 ,\forall i,j \in V \}$.}
		\STATE{\textbf{1:} Calculate the centroid map $\ac= \frac{1}{T} \sum_{t=1}^T A_t$.}
		\STATE{ \textbf{2:} Pick the map $\bar{A}^* \in \Maps$ which minimizes the $\dthetasq$ distance from the centroid $\ac$, i.e. $\bar{A}^* = \argmin_{A\in \Maps} \dthetasq(A,\ac)$.}
        \RETURN{$\bar{A}^*$} 
	\end{algorithmic}
	\caption{Finding the Sample Medoid}
	\label{alg:fast_pair_alg_weighted}
\end{algorithm}

The above theorem leads to Algorithm \ref{alg:fast_pair_alg_weighted} with the following remark:  
\begin{remark}
Algorithm 1 returns the correct sample medoid and runs in $O(T)$ time. 
\end{remark}
%
We note that calculating the sample medoid in algorithm 1 has no dependence on the generating method. Therefore, if a set of maps are produced through any mechanism and are considered to be representative and sufficiently diverse, then algorithm 1 can be used to obtain the sample medoid in time that is linear in the number of samples. 

\subsection{Sample Complexity for Obtaining the Population Centroid}\label{sec:pop_centoid}
In the prior section, we introduced the sample centroid $\ac$ which is equal to the empirical mean from taking the average of the adjacency matrices, i.e., $\ac =\frac{1}{T} \sum_{t=1}^T A_t$. We now consider the population centroid $\acpop=\lim_{T\rightarrow \infty} \sum_{t=1}^T A_t$. Clearly, by the law of large numbers \cite{zubrzycki1972lectures}, we have $\acpop(i,j)=\E[A(i,j)]$ . It is also clear that $\acpop$ has an interesting property, specifically the $(i,j)$-entry equals the probability that $i$ and $j$ are in the same district:
\begin{restatable}{proposition}{acpopprop} \label{th:acpop_prop}
$\acpop(i,j)=\Pr[i \text{ and } j \text{ in the same district}]$. 
\end{restatable}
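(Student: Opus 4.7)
The plan is to proceed in two short steps, exploiting the fact that the adjacency representation $A(i,j)$ is literally the indicator of the event ``$i$ and $j$ lie in the same district.''

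First I would note that, treating the random map $M \sim \Dist(\MapsDist)$ as a random variable, the induced matrix $A$ is also random, and by the definition of the adjacency representation in Section \ref{sec:distance_def}, the scalar random variable $A(i,j)$ takes only the values $0$ and $1$, with $A(i,j) = 1$ exactly when $\exists V_{\ell} \in M: i,j \in V_{\ell}$. Thus $A(i,j)$ is a Bernoulli random variable whose success probability is precisely $\Pr[i \text{ and } j \text{ in the same district}]$.

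Second, the excerpt has already observed, via the (strong) law of large numbers applied entrywise to the i.i.d.\ sequence $A_1, A_2, \dots$ (where i.i.d.\ sampling is guaranteed by Condition \ref{samp_assump}), that
\[
    \acpop(i,j) \;=\; \lim_{T\to\infty} \frac{1}{T}\sum_{t=1}^T A_t(i,j) \;=\; \E[A(i,j)].
\]
Since the expectation of a Bernoulli indicator equals the probability of the underlying event, the first step then yields $\E[A(i,j)] = \Pr[i \text{ and } j \text{ in the same district}]$, which combined with the display above gives the claim.

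There is no real obstacle here: the only subtlety worth being careful about is that the law-of-large-numbers step requires the i.i.d.\ sampling assumption (Condition \ref{samp_assump}) and the trivial fact that $A(i,j) \in \{0,1\}$ guarantees finite mean so that the LLN applies without integrability concerns. Everything else is just unpacking the definition of the adjacency representation of a partition.
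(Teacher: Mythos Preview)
Your proposal is correct and matches the paper's own proof essentially line for line: the paper also invokes the definition of the adjacency matrix together with Condition~\ref{samp_assump} to write $\acpop(i,j)=\E[A(i,j)]$ and then expands this expectation as $(1)\Pr[i,j \text{ same district}] + (0)\Pr[i,j \text{ different districts}]$.
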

Now we show that with a sufficient number of samples, the sample centroid converges to the population centroid entry-wise and in terms of the $\dthetasq$ value. Specifically, we have the following proposition:
\begin{restatable}{proposition}{sampcompone} \label{th:samp_comp_1}
If we sample $T\ge\frac{1}{\epsilon^2}\ln{\frac{n}{\delta}}$ $\iid$ samples, then with probability at least $1-\delta$, we have that $\forall i,j \in V: |\ac(i,j)-\acpop(i,j)| \leq \epsilon$. Further, let $\kappa=\max_{i,j \in V}\sqrt{\theta(i,j)}$, if we have $T \ge \frac{\kappa n^2}{\epsilon} \ln{\frac{n}{\delta}}$ $\iid$ samples, then $\dthetasq(\ac,\acpop)\leq \epsilon$ with probability at least $1-\delta$. 
\end{restatable}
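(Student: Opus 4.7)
The plan is to obtain both bounds via Hoeffding's inequality applied entrywise, combined with a union bound over the $O(n^2)$ index pairs. For a fixed pair $(i,j)$, each $A_t(i,j) \in \{0,1\}$ is a Bernoulli random variable, and by Proposition~\ref{th:acpop_prop} its mean equals $\acpop(i,j)$. Because Condition~\ref{samp_assump} supplies iid samples, $\ac(i,j) = \tfrac{1}{T}\sum_{t=1}^T A_t(i,j)$ is the empirical mean of $T$ iid $[0,1]$-valued random variables with expectation $\acpop(i,j)$, so Hoeffding's inequality yields $\Pr[|\ac(i,j)-\acpop(i,j)|>\epsilon] \le 2 e^{-2T\epsilon^2}$. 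A union bound over the $n^2$ pairs gives overall failure probability at most $2n^2 e^{-2T\epsilon^2}$; requiring this to be at most $\delta$ and solving for $T$ yields $T \ge \tfrac{1}{2\epsilon^2}\ln\tfrac{2n^2}{\delta}$, which after absorbing constants and using $\ln(n^2) = 2\ln n$ recovers the stated rate $O\!\bigl(\tfrac{1}{\epsilon^2}\ln\tfrac{n}{\delta}\bigr)$.

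For the second bound I would reduce to the first via a worst-case estimate. Using $\theta(i,j) \le \kappa^2$ and the fact that the sum contains at most $n^2$ terms, whenever all entries satisfy $|\ac(i,j)-\acpop(i,j)| \le \epsilon'$ we have
\begin{equation*}
\dthetasq(\ac,\acpop) \;=\; \tfrac{1}{2}\sum_{i,j}\theta(i,j)\bigl(\ac(i,j)-\acpop(i,j)\bigr)^2 \;\le\; \tfrac{1}{2}\,\kappa^2 n^2\,(\epsilon')^2.
\end{equation*}
Choosing $\epsilon' = \sqrt{2\epsilon}/(\kappa n)$ makes the right-hand side at most $\epsilon$, and then invoking the first part with accuracy $\epsilon'$ in place of $\epsilon$ requires $T \ge \tfrac{1}{(\epsilon')^2}\ln\tfrac{n}{\delta}$ up to constants, i.e., $T$ of order $\tfrac{\kappa^2 n^2}{\epsilon}\ln\tfrac{n}{\delta}$, matching the form stated in the proposition.

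The overall argument is almost entirely standard concentration: beyond the Bernoulli-valuedness of each $A_t(i,j)$ and the iid sampling assumption, no structural property of valid maps is used, which aligns with the paper's remark that the distance-based proofs do not rely on compactness or contiguity. The one place requiring care is the bookkeeping in the second part, namely choosing $\epsilon'$ so that the $\tfrac{1}{(\epsilon')^2}$ factor from Hoeffding combines cleanly with the $\kappa^2 n^2$ factor from the weighted sum, while the logarithm inside the union bound absorbs thanks to $\ln\tfrac{n^2}{\delta} = O(\ln\tfrac{n}{\delta})$. That is the only nontrivial step; everything else is direct substitution.
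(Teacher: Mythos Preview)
Your proposal is correct and follows essentially the same route as the paper: Hoeffding entrywise plus a union bound over the $O(n^2)$ index pairs for the first part, then a reduction to the first part with rescaled accuracy $\epsilon' \approx \sqrt{\epsilon}/(\kappa n)$ for the second. Your derived rate of order $\kappa^2 n^2/\epsilon$ in the second part matches what the paper's own proof actually obtains (the $\kappa$ in place of $\kappa^2$ in the statement appears to be a typo there), so the only differences are cosmetic bookkeeping of constants.
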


\subsection{Obtaining the Population Medoid}\label{sec:pop_medoid}
Having found the sample centroid $\ac$ and shown that it is a good estimate of the population centroid $\acpop$, we now show that we can obtain a good estimate of the population medoid by solving an optimization problem. Assuming that we have the  population centroid $\acpop$, then the population medoid is simply a valid redistricting map $A$ which has a minimum $\dthetasq(A,\acpop)$ value. This follows immediately from Theorem \ref{th:decomp_th}. More interestingly, we show that this optimization problem is a constrained instance of the min $k$-cut problem:
\begin{restatable}{theorem}{minkcut} \label{th:min_k_cut}
Given the population centroid $\acpop$, the population medoid $\amedpop$ can be obtained by solving a constrained min $k$-cut problem. 
\end{restatable}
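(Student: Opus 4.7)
The plan is to reduce the optimization defining $\amedpop$ to a weighted $k$-cut on the complete graph over $V$, with the validity constraints from Section \ref{sec:problemsetup} turning it into a constrained instance.

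First, I would lift Theorem \ref{th:decomp_th} to the population setting. Dividing both sides of (\ref{main_decomp_eq}) by $T$ and letting $T\to\infty$, the law of large numbers yields
\[
\E_{A' \sim \Dist(\MapsDist)}\bigl[\dtheta(A', A)\bigr] \;=\; \E_{A' \sim \Dist(\MapsDist)}\bigl[\dthetasq(A', \acpop)\bigr] \;+\; \dthetasq(\acpop, A).
\]
Since the first term on the right does not depend on $A$, the population medoid $\amedpop$ is exactly a valid map minimizing $\dthetasq(A, \acpop)$ over $\MapsDist$.

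Next, I would expand $\dthetasq(A,\acpop)=\tfrac{1}{2}\sum_{i,j}\theta(i,j)(A(i,j)-\acpop(i,j))^2$, using the fact that each $A(i,j)\in\{0,1\}$. Splitting by the value of $A(i,j)$ and subtracting the $A$-independent constant $\tfrac{1}{2}\sum_{i,j}\theta(i,j)\acpop(i,j)^2$ reduces the objective to $\tfrac{1}{2}\sum_{i,j:\,A(i,j)=1}\theta(i,j)(1-2\acpop(i,j))$. Because $\sum_{i,j}\theta(i,j)(2\acpop(i,j)-1)$ is itself a constant, minimizing $\dthetasq(A,\acpop)$ is equivalent, up to additive constants, to minimizing
\[
\sum_{i,j \,:\, A(i,j)=0} \theta(i,j)\bigl(2\acpop(i,j)-1\bigr),
\]
which is the total weight of the cut induced by the partition on the complete graph over $V$ with edge weights $w(i,j):=\theta(i,j)(2\acpop(i,j)-1)$. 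Thus $\amedpop$ is a partition of $V$ into $k$ non-empty parts that minimizes the $w$-weighted cut, subject to contiguity, near-equal population, and the other compactness constraints of a valid map; this is precisely a constrained min $k$-cut instance, as claimed.

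The main obstacle I anticipate is that the weights $w(i,j)$ are not guaranteed to be non-negative: whenever $\acpop(i,j)<1/2$ one has $w(i,j)<0$, so the instance lies outside the classical non-negative-weight formulation of min $k$-cut. I would handle this either by stating the theorem for the general real-weighted version of min $k$-cut (the natural setting here, and NP-hard regardless), or by observing that under the near-equal-population constraint the number of intra-district pairs lies in a narrow range, so uniformly shifting every $w(i,j)$ by a sufficiently large constant gives a non-negative-weighted instance whose optimum differs from the original by an error controlled by the tolerance $\epsilon$ in the equal-population constraint.
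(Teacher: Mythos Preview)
Your proposal is correct and follows essentially the same route as the paper: reduce $\amedpop$ to $\argmin_{A\in\MapsDist}\dthetasq(A,\acpop)$ via the decomposition theorem, then exploit the binarity of $A$ to rewrite the objective as a cut weight plus a constant. The only cosmetic difference is that the paper introduces the complement matrix $B(i,j)=1-A(i,j)$ and expands an algebraic identity for $|A(i,j)-\acpop(i,j)|$, arriving at a max $k$-cut with weights $\tfrac12\theta(i,j)(1-2\acpop(i,j))$, whereas you split the sum directly by the value of $A(i,j)$ and land on the equivalent min $k$-cut with weights $\theta(i,j)(2\acpop(i,j)-1)$; your anticipated obstacle about possibly negative weights is exactly what the paper flags in a footnote.
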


If we have a good estimate $\ac$ of the population centroid $\acpop$, then we can solve the above optimization using $\ac$ instead of $\acpop$ and obtain an estimate of the population medoid $\amed$ instead of the true population medoid $\amedpop$ and bound the error of that estimate. The issue is that the min $k$-cut problem is NP-hard \cite{goldschmidt1994polynomial,saran1995finding} \footnote{Note that in our case the min $k$-cut problem can have negative edge weights while the min $k$-cut problem is generally stated with non-negative weights. Nevertheless, we still minimize a cut objective and the non-negative weight min $k$-cut instance is trivially reducible to a min $k$-cut instance with negative and non-negative weights.}. Further, the existing approximation algorithms assume non-negativity of the weights. Even if these approximation algorithms can be tailored to this setting, the additional constraints on the partition being a valid redistricting (each partition being contiguous, of equal population, and compact) make it quite difficult to approximate the objective. In fact, excluding the objective and focusing on the constraint alone, only the work of \cite{hettle2021balanced} has produced approximation algorithm for redistricting maps but has done that for the restricted case of grid graphs. Further, while there exists heuristics for solving min $k$-cut for redistricting maps they only scale to at most around $500$ vertices \cite{validi2022political}.

Having shown the difficulty in obtaining the population medoid by solving an optimization problem, it is reasonable to wonder whether we can gain any guarantees about the population medoid by sampling. We show the negative result that we cannot guarantee that we can estimate the sample medoid of a distribution with high probability by choosing a sampled map even if we sample an arbitrarily large number of maps. This implies as a corollary that the sample medoid does not converge to the population medoid in contrast to the centroid (see Proposition \ref{th:samp_comp_1}).  

\begin{restatable}{theorem}{negativeredistmedoid}\label{th:negative_2d}
For any arbitrary $T$ many $\iid$ samples $\{A_1,\dots,A_T\}$ there exists a distribution over a set of redistricting maps such that: (1) $\Pr[\min_{A\in \{A_1,\dots,A_T\}} d(A,A^*) \ge 0.331] \ge \frac{2}{3}$ and (2) $\Pr[\min_{A \in \{A_1,\dots,A_T\}} f(A) \ge 1.1 f(A^*)] \ge \frac{2}{3}$ where $f(.)$ is the medoid cost function.  
\end{restatable}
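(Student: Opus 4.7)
The plan is to construct, for each $T$, a distribution over a small set of valid maps in which the true population medoid $A^*$ carries probability so small that an iid sample of size $T$ misses it with probability at least $2/3$, while every other map in the support is strictly separated from $A^*$ in both distance and cost. Both conclusions will then follow from the single event $\{A^* \notin \{A_1, \dots, A_T\}\}$, whose probability is tuned to exactly $2/3$ by taking the mass on $A^*$ to be $\delta_T := 1 - (2/3)^{1/T}$.

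The concrete construction is a ``star'' configuration in the metric. Fix a graph $G$ (say a wheel: a hub $v_0$ connected to spokes $v_1, \dots, v_{n-1}$ with the spokes forming a cycle, so that all partitions used below are contiguous), let $A^* = \{\{v_0\}, \{v_1, \dots, v_{n-1}\}\}$, and for $i = 1, \dots, N$ let $A_i = \{\{v_0, v_i\}, \{v_1, \dots, v_{n-1}\} \setminus \{v_i\}\}$ with $N \le n-1$. A direct count under the unweighted distance $d_{\mathbf{1}}$ gives $a := d(A^*, A_i) = n-1$ for every $i$ and $b := d(A_i, A_j) = 2(n-2)$ for $i \neq j$. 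Define $\Dist$ to put mass $(1-\delta_T)/N$ on each $A_i$ and $\delta_T$ on $A^*$. Computing, $f(A^*) = a(1-\delta_T)$ and $f(A_i) = \tfrac{(N-1)(1-\delta_T)}{N}\,b + \delta_T\,a$, so the medoid inequality $f(A^*) < f(A_i)$ reduces to $b/a > N(1-2\delta_T)/((N-1)(1-\delta_T))$, whose $\delta_T \to 0$ limit is $N/(N-1)$ and stays strictly below the realized $b/a = 2(n-2)/(n-1)$ for suitable $n, N$. Hence $A^*$ is the unique population medoid.

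On the event $\{A^* \notin \{A_1, \dots, A_T\}\}$, which has probability $(1-\delta_T)^T = 2/3$ by construction, every sample coincides with some $A_i$, so $\min_t d(A_t, A^*) = a$ and, using the clean identity $f(A_i)/f(A^*) = (N-1)b/(Na) + \delta_T/(1-\delta_T)$, the ratio $\min_t f(A_t)/f(A^*)$ is monotone in $\delta_T$ and bounded below by its $\delta_T \to 0$ limit $(N-1)b/(Na)$. Both numerical thresholds in the theorem are then obtained by tuning $n$, $N$, and, if needed, the weights $\Theta$ to adjust the absolute scale of $a$. As an illustrative choice, taking $n = 6$ and $N = 5$ gives normalized $a = (n-1)/\binom{n}{2} = 1/3 > 0.331$ and limiting cost ratio $32/25 = 1.28 > 1.1$, so both bounds hold on the event of probability $\ge 2/3$.

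The main obstacle is matching both numerical constants with a single concrete example: the distance bound $a \ge 0.331$ (under whatever normalization convention the theorem uses for $d$) pushes toward small $n$, while the cost-ratio bound pushes toward larger $N$ or toward $b/a$ approaching $2$. I would resolve this either by increasing $N$ (which weakens the required $b/a$ toward $1.1$ and is easily realized by the star construction) or by using the weighted metric $d_\Theta$ with a tailored $\Theta$ to independently scale $a$ and $b$; either path delivers the constants $0.331$ and $1.1$ with room to spare. A secondary concern is verifying validity (contiguity, equal-population, compactness) of the star-type partitions, which is handled by choosing $G$ as a wheel and, if needed, assigning vertex populations so that each partition used lies within the equal-population relaxation allowed by the problem definition.
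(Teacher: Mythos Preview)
Your star/hub construction is a valid alternative to the paper's, and in some ways cleaner. Both proofs plant almost all the mass on a few ``satellite'' maps whose medoid costs are uniformly bounded away from the optimum, then choose $\delta$ so that the complementary event has probability $(1-\delta)^T \ge 2/3$. The paper's construction differs in two respects. First, it uses a bespoke graph (a $4$-vertex path with two of the vertices subdivided, giving $|V|=10$), a \emph{weighted} distance $d_\Theta$ with $\theta(i,j)\in\{\epsilon,\tfrac12,1\}$, and four specific maps $M_1,\dots,M_4$ whose pairwise distances are computed by hand. Second---and this is the conceptually interesting difference---the paper never identifies the population medoid. It only exhibits a witness map $M_4$ with $f(M_4)<f(M_i)$ for $i=1,2,3$, so $M^*\notin\{M_1,M_2,M_3\}$; the distance bound then comes from the Lipschitz inequality $d(M_i,M^*)\ge f(M_i)-f(M^*)\ge f(M_i)-f(M_4)$. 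This sidesteps any need to know $M^*$ or to enumerate $\MapsDist$.

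That last point is exactly where your argument has a small gap. You verify $f(A^*)<f(A_i)$ only for the satellites in the support, but the population medoid is $\argmin_{M\in\MapsDist} f(M)$ over \emph{all} valid maps on $G$, and your wheel may admit other valid $2$-partitions. You can close this in two ways: (i) pin down populations so that the equal-population tolerance rules out every partition except $A^*$ and the $A_i$ (e.g.\ take $w(v_0)=(1-\epsilon)P/2$ and equal spoke weights, then check that any district $\{v_0\}\cup\text{arc}$ with two or more spokes already exceeds $(1+\epsilon)P/2$); or (ii) simply adopt the paper's trick and use $f(M^*)\le f(A^*)$, which immediately gives $f(A_i)/f(M^*)\ge f(A_i)/f(A^*)\ge 1.28$ and $d(A_i,M^*)\ge f(A_i)-f(A^*)=1.4+3.6\,\delta_T>0.331$ in your $n=6$, $N=5$ example. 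Either repair is short. One cosmetic note: the paper's distance is not normalized, so $a=n-1=5\ge 0.331$ already holds without dividing by $\binom{n}{2}$; the constant $0.331$ in the statement is an artifact of the paper's particular parameter choices, not a normalized quantity.
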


We therefore, use a heuristic to find the medoid as mentioned in the next section.
\section{Experiments}\label{sec:experiments}
\begin{figure*}[h!]
  \centering
  \includegraphics[scale=0.7]{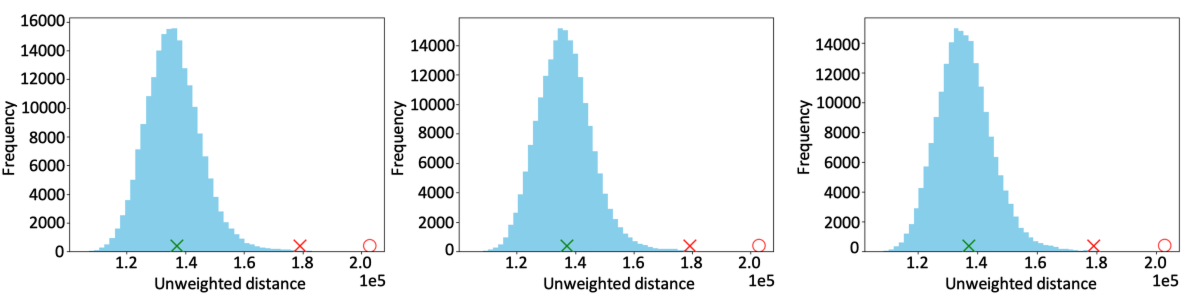}
  \caption{Distance histograms for NC  using the unweighted distance measure. Different plots correspond to different seeds. For NC the distances of gerrymandered maps are indicated with red markers whereas the distances of the remedial maps are indicated with green markers (the circle and the X are for 2011 and 2016 enacted maps, respectively).}
  \label{fig:dist_hist}
\end{figure*}
We conduct experiments on three states, North Carolina (NC), Maryland (MD), and Pennsylvania (PA), which have featured in major court cases on partisan gerrymandering~\cite{LWVPenn2018,CaseRucho2019}. The number of voting units (vertices) are around $2{,}700$, $1{,}800$, and $8{,}900$, and the number of districts are $13$, $8$, and $18$ for NC, MD, and PA, respectively\footnote{For PA, the number of districts was reduced to 17 after the 2020 census, but we are using past election results with 18 districts.}. 
We focus on the results for NC here and discuss the qualitatively-similar results for PA and MD in Appendix \ref{app:exps}. 
NC is especially valuable because we have good examples of both gerrymandered and not gerrymandered maps (enacted maps which are widely considered gerrymandered and a remedial map which is not). 

To generate a collection of maps, we use the Recombination algorithm, $\recom$, from \citet{deford2019recombination} whose implementation is available online. $\recom$ is a Markov Chain Monte Carlo (MCMC) sampling method, and hence, the generated samples are not actually $\iid$. While this means Condition \ref{samp_assump} does not hold, we believe our theorems still have utility and future work can address more realistic sampling conditions. Moreover, we always exclude the first $2{,}000$ samples from any calculation as these are considered to be ``burn-in'' samples\footnote{In MCMC, the chain is supposed to converge to a stationary distribution after some number of steps, called the mixing time. Although the mixing time has not been theoretically calculated for $\recom$, empirically it seems that $2{,}000$ steps are sufficient.}. Throughout this section, when we say distance, we mean $\dthetasq(.,.)$ instead of $\dtheta(.,.)$. Further experimental results and figures are included in Appendix \ref{app:exps}.

\vspace{-0.1cm}
\paragraph{\textbf{Convergence of the Centroid:}} 
Previous work \cite{deford2019recombination,deford2019redistricting} has used the $\recom$ algorithm for estimating statistics such as the histogram of election seats won by a party and determined that using $50{,}000$ samples is sufficient for accurate results. However, our setting is more challenging. Specifically, the centroid includes $\Omega(n)$ entries where $n$ is the total number of voting units (vertices) whereas an election histogram includes only $k$ entries where $k$ is the number of districts---usually orders of magnitude smaller than the number of voting units. Thus, we sample $200{,}000$ maps instead to estimate the centroid. Here, we emphasize the importance of our linear-time algorithm since using a quadratic-time algorithm on samples of the order of even $50{,}000$ could be computationally difficult. Following similar practice to \citet{herschlag2020quantifying} for verifying convergence, we repeat the procedure (sampling using $\recom$ and estimating the centroid) for a total of three times for each state, starting from a different seed map each time and confirming that all three runs result in essentially the same centroid estimate.

To verify the closeness of the different centroid estimates, we calculate the distances between them and compare them to their distances from sampled redistricting maps using $\recom$. We find that the centroids are orders of magnitude closer to each other than to any other sampled map. For example, the maximum unweighted distance between any two centroids is less than $130$ whereas the minimum unweighted distance between any of the three centroids and any sampled map is more than $100{,}000$. 
Similarly, the maximum weighted distance between any two centroids is less than $1.6 \times 10^9$ whereas the minimum weighted distance between a sampled  map and a centroid is at least $1.3\times 10^{12}$ which is again three orders of magnitude higher.

\vspace{-0.1cm}
\paragraph{\textbf{Distance Histogram and Detecting Gerrymandering:}} For each state, we plot the distance histogram from its centroid. More specifically, having estimated the centroid $\ac$, we sample $200{,}000$ maps and calculate $\dthetasq(\ac,A_t)$ where $A_t$ is the $t^{\text{th}}$ sampled map. Figure \ref{fig:dist_hist} shows the unweighted distance histogram for NC\footnote{In Appendix \ref{app:exps}, we show the histogram for PA and MD as well.}. The histogram appears like a normal distribution, peaking at the middle (around the mean) and falling almost symmetrically away. 
This shows that the maps do not concentrate near the centroid even though it minimizes the sum of $\dthetasq(\ac,A_t)$ distances. 
Interestingly, the histogram has a similar shape for both distances  (unweighted and weighted), and this shape remains unchanged across the different seeds.   

Furthermore, previous work used similar sampling methods to detect gerrymandered maps \cite{Chikina2017,mattingly2014redistricting,herschlag2020quantifying}. In essence, these papers demonstrated that the election outcome achieved by the enacted map was rare in comparison to a large sampled ensemble of redistricting maps. Using similar logic, we can also detect gerrymandered maps. Specifically, the 2011 and 2016 enacted maps of NC were widely considered to be gerrymandered, and both maps are at the right tail of the histogram, far from the centroid. By contrast, a remedial NC map drawn by a set of retired judges \cite{herschlag2020quantifying} is much closer to the centroid (see Figure \ref{fig:dist_hist} red and green marked symbols). Interestingly, all gerrymandered maps are in the 99th percentile in terms of distance (for both distance measures and across three seeds). 

This suggests that our method can detect gerrymandered maps with two advantages over previous methods: it does not use election results or partisan outcomes and it is very interpretable. Thus, a guideline or rule that maps should not be far away from the centroid can exist alongside reforms that prohibit explicit partisan consideration. 


\paragraph{\textbf{Finding the medoid:}}
Since we have shown in Subsection \ref{sec:pop_medoid} that the medoid cannot be obtained by sampling, we follow a heuristic that consists of these steps: (1) Sample $200{,}000$ maps and pick the one closest to the centroid $A_{\text{closest}}$. (2) Start the $\recom$ chain from $A_{\text{closest}}$ but given a specific state (redistricting map) we only allow transitions to new states (maps) that are closer to the centroid, and we do this for $200{,}000$ steps to obtain the final estimated medoid $\hat{A}^*$. We follow this procedure three times one for each centroid \footnote{As mentioned before we get three centroids each from sampling a chain that starts with a different seed.}. Figure \ref{fig:NC_medoids} (top row) shows the $A_{\text{closest}}$ medoids from two different runs (each comparing to a different centroid), and it is easy to see they are different. The bottom row shows the final medoids 
$\hat{A}^*$ which are visually more similar to each other and also close in distance.
\begin{figure}
  \centering
  \includegraphics[scale=0.1]{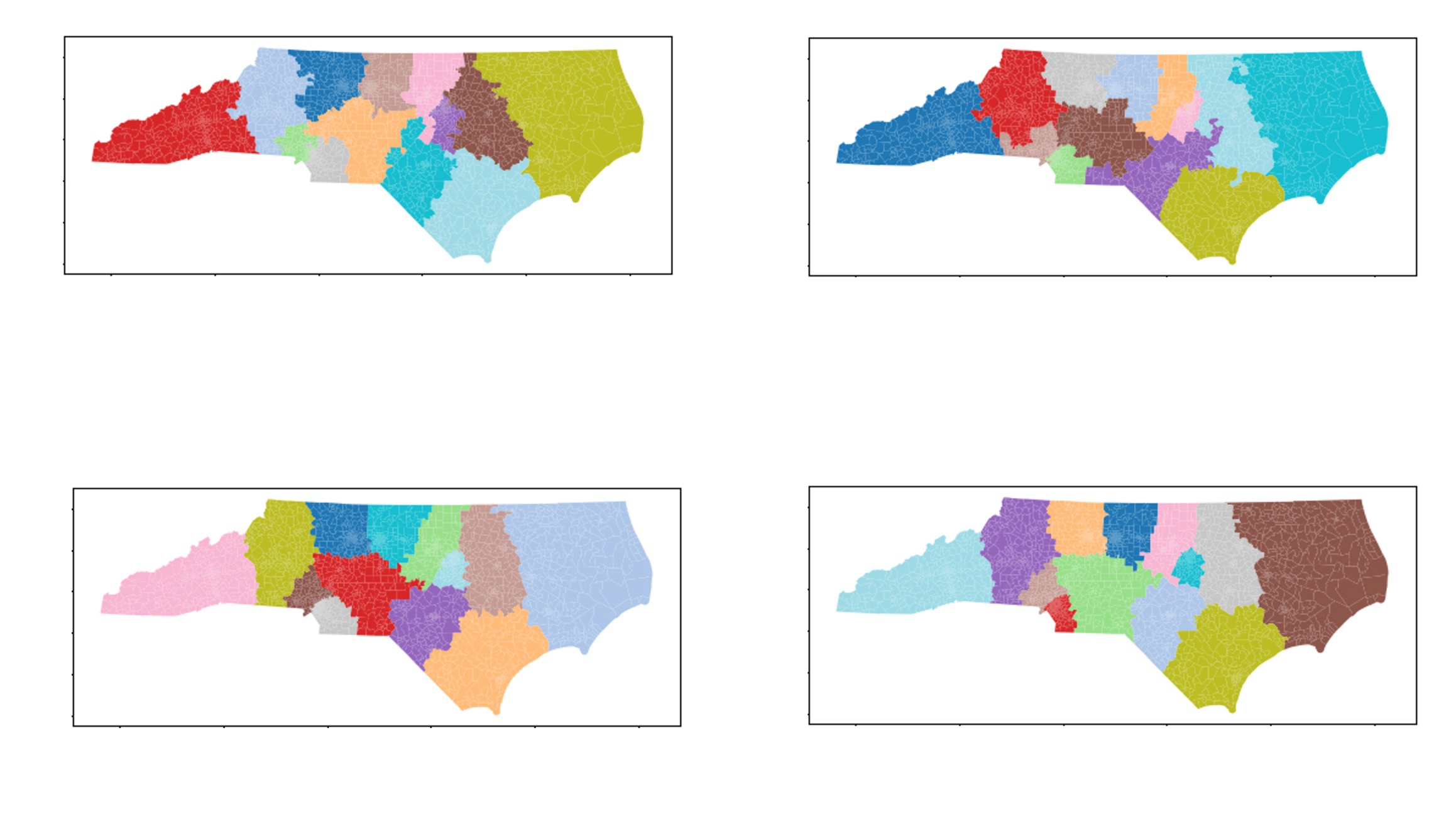}
  \caption{NC medoids, each column is for a specific seed. Top row: $A_{\text{closest}}$, Bottom row: $\hat{A}^*$. }
  \label{fig:NC_medoids}
\end{figure}
\vspace{-0.1cm}

\section{Conclusion} 
In this paper we introduced a framework which accounts for the challenge of choosing from a large space of valid maps in the redistricting process. Specifically, we introduced a well-motivated family of distance measures and showed how we can obtain the medoid map according to this measure. 
Additionally, we showed experimentally that our framework can be used to find outlier (gerrymandered) maps based on their distance from the centroid map. Finally, we believe that there are further applications of having a distance measure over redistricting maps that are interesting to investigate such as using high dimensional visualization methods to gain further insight into the map ensemble.

\bibliography{refs}

\begin{thebibliography}{42}
\providecommand{\natexlab}[1]{#1}

\bibitem[{Abrishami et~al.(2020)Abrishami, Guillen, Rule, Schutzman, Solomon,
  Weighill, and Wu}]{abrishami2020geometry}
Abrishami, T.; Guillen, N.; Rule, P.; Schutzman, Z.; Solomon, J.; Weighill, T.;
  and Wu, S. 2020.
\newblock Geometry of graph partitions via optimal transport.
\newblock \emph{SIAM Journal on Scientific Computing}, 42(5): A3340--A3366.

\bibitem[{Alexeev and Mixon(2018)}]{alexeev2018impossibility}
Alexeev, B.; and Mixon, D.~G. 2018.
\newblock An impossibility theorem for gerrymandering.
\newblock \emph{The American Mathematical Monthly}, 125: 878--884.

\bibitem[{Ashtiani, Kushagra, and Ben-David(2016)}]{ashtiani2016clustering}
Ashtiani, H.; Kushagra, S.; and Ben-David, S. 2016.
\newblock Clustering with same-cluster queries.
\newblock \emph{Advances in neural information processing systems}, 29.

\bibitem[{Bickel(1966)}]{bickel1966voting}
Bickel, A.~M. 1966.
\newblock The Voting Rights Cases.
\newblock \emph{The Supreme Court Review}, 1966: 79--102.

\bibitem[{Borg et~al.(2013)Borg, Groenen, Mair, Borg, Groenen, and
  Mair}]{borg2013mds}
Borg, I.; Groenen, P.~J.; Mair, P.; Borg, I.; Groenen, P.~J.; and Mair, P.
  2013.
\newblock Mds algorithms.
\newblock \emph{Applied Multidimensional Scaling}, 81--86.

\bibitem[{Brandt et~al.(2016)Brandt, Conitzer, Endriss, Lang, and
  Procaccia}]{brandt2016handbook}
Brandt, F.; Conitzer, V.; Endriss, U.; Lang, J.; and Procaccia, A.~D. 2016.
\newblock \emph{Handbook of computational social choice}.
\newblock Cambridge University Press.

\bibitem[{Brubach, Srinivasan, and Zhao(2020)}]{brubach2020meddling}
Brubach, B.; Srinivasan, A.; and Zhao, S. 2020.
\newblock Meddling metrics: the effects of measuring and constraining partisan
  gerrymandering on voter incentives.
\newblock In \emph{Proceedings of the 21st ACM Conference on Economics and
  Computation}, 815--833.

\bibitem[{Bycoffe et~al.(2018)Bycoffe, Koeze, Wasserman, and Wolfe}]{Atlas538}
Bycoffe, A.; Koeze, E.; Wasserman, D.; and Wolfe, J. 2018.
\newblock {The Atlas Of Redistricting}.
\newblock \url{https://projects.fivethirtyeight.com/redistricting-maps/}.
\newblock [Online; published 25-January-2018; accessed 15-August-2019].

\bibitem[{Chen(2021)}]{TuftsRedistricting2021}
Chen, M. 2021.
\newblock Tufts research lab aids states with redistricting process.
\newblock \emph{The Tufts Daily}.

\bibitem[{Chhabra, Roy, and Mohapatra(2020)}]{chhabra2020suspicion}
Chhabra, A.; Roy, A.; and Mohapatra, P. 2020.
\newblock Suspicion-free adversarial attacks on clustering algorithms.
\newblock In \emph{Proceedings of the AAAI Conference on Artificial
  Intelligence}, volume~34, 3625--3632.

\bibitem[{Chikina, Frieze, and Pegden(2017)}]{Chikina2017}
Chikina, M.; Frieze, A.; and Pegden, W. 2017.
\newblock Assessing significance in a Markov chain without mixing.
\newblock \emph{Proceedings of the National Academy of Sciences}, 114(11):
  2860--2864.

\bibitem[{Cho(2019)}]{Cho2019technology}
Cho, W. K.~T. 2019.
\newblock Technology-Enabled Coin Flips for Judging Partisan Gerrymandering.
\newblock \emph{Southern California law review}, 93.

\bibitem[{Cin{\`a}, Torcinovich, and Pelillo(2022)}]{cina2022black}
Cin{\`a}, A.~E.; Torcinovich, A.; and Pelillo, M. 2022.
\newblock A black-box adversarial attack for poisoning clustering.
\newblock \emph{Pattern Recognition}, 122: 108306.

\bibitem[{Cohen-Addad, Klein, and Young(2018)}]{cohen2018balanced}
Cohen-Addad, V.; Klein, P.~N.; and Young, N.~E. 2018.
\newblock Balanced centroidal power diagrams for redistricting.
\newblock In \emph{Proceedings of the 26th ACM SIGSPATIAL International
  Conference on Advances in Geographic Information Systems}, 389--396.

\bibitem[{Commission(2021)}]{MDProposalsProcess}
Commission, M. C.~R. 2021.
\newblock {Redistricting Map Submission Process}.
\newblock \url{https://redistricting.maryland.gov/Pages/plan-proposals.aspx}.
\newblock [Online; accessed 20-October-2021].

\bibitem[{DeFord and Duchin(2019)}]{deford2019redistricting}
DeFord, D.; and Duchin, M. 2019.
\newblock Redistricting reform in Virginia: Districting criteria in context.
\newblock \emph{Virginia Policy Review}, 12(2): 120--146.

\bibitem[{DeFord, Duchin, and Solomon(2019)}]{deford2019recombination}
DeFord, D.; Duchin, M.; and Solomon, J. 2019.
\newblock Recombination: A family of Markov chains for redistricting.
\newblock \emph{arXiv preprint arXiv:1911.05725}.

\bibitem[{Duchin and Walch(2021)}]{duchin2021political}
Duchin, M.; and Walch, O. 2021.
\newblock Political Geometry.

\bibitem[{Fifield et~al.(2015)Fifield, Higgins, Imai, and
  Tarr}]{fifield2015new}
Fifield, B.; Higgins, M.; Imai, K.; and Tarr, A. 2015.
\newblock A new automated redistricting simulator using markov chain monte
  carlo.
\newblock \emph{Work. Pap., Princeton Univ., Princeton, NJ}.

\bibitem[{Goldschmidt and Hochbaum(1994)}]{goldschmidt1994polynomial}
Goldschmidt, O.; and Hochbaum, D.~S. 1994.
\newblock A polynomial algorithm for the k-cut problem for fixed k.
\newblock \emph{Mathematics of operations research}, 19(1): 24--37.

\bibitem[{He, Xu, and Deng(2003)}]{he2003discovering}
He, Z.; Xu, X.; and Deng, S. 2003.
\newblock Discovering cluster-based local outliers.
\newblock \emph{Pattern recognition letters}, 24(9-10): 1641--1650.

\bibitem[{Herschlag et~al.(2020)Herschlag, Kang, Luo, Graves, Bangia, Ravier,
  and Mattingly}]{herschlag2020quantifying}
Herschlag, G.; Kang, H.~S.; Luo, J.; Graves, C.~V.; Bangia, S.; Ravier, R.; and
  Mattingly, J.~C. 2020.
\newblock Quantifying gerrymandering in north carolina.
\newblock \emph{Statistics and Public Policy}, 7(1): 30--38.

\bibitem[{Hettle et~al.(2021)Hettle, Zhu, Gupta, and Xie}]{hettle2021balanced}
Hettle, C.; Zhu, S.; Gupta, S.; and Xie, Y. 2021.
\newblock Balanced Districting on Grid Graphs with Provable Compactness and
  Contiguity.
\newblock \emph{arXiv preprint arXiv:2102.05028}.

\bibitem[{Kemeny(1959)}]{kemeny1959mathematics}
Kemeny, J.~G. 1959.
\newblock Mathematics without numbers.
\newblock \emph{Daedalus}, 88(4): 577--591.

\bibitem[{Kendall(1938)}]{kendall1938new}
Kendall, M.~G. 1938.
\newblock A new measure of rank correlation.
\newblock \emph{Biometrika}, 30(1/2): 81--93.

\bibitem[{Knox and Ng(1998)}]{knox1998algorithms}
Knox, E.~M.; and Ng, R.~T. 1998.
\newblock Algorithms for mining distancebased outliers in large datasets.
\newblock In \emph{Proceedings of the international conference on very large
  data bases}, 392--403. Citeseer.

\bibitem[{Ko et~al.(2022)Ko, Taylor, Agarwal, and Munagala}]{ko2022all}
Ko, S.-H.; Taylor, E.; Agarwal, P.; and Munagala, K. 2022.
\newblock All Politics is Local: Redistricting via Local Fairness.
\newblock \emph{Advances in Neural Information Processing Systems}, 35:
  17443--17455.

\bibitem[{Lin et~al.(2022)Lin, Chen, Chmielewski, Zaman, and
  Fain}]{lin2022auditing}
Lin, J.; Chen, C.; Chmielewski, M.; Zaman, S.; and Fain, B. 2022.
\newblock Auditing for gerrymandering by identifying disenfranchised
  individuals.
\newblock In \emph{Proceedings of the 2022 ACM Conference on Fairness,
  Accountability, and Transparency}, 1125--1135.

\bibitem[{Liu, Cho, and Wang(2016)}]{Liu2016}
Liu, Y.~Y.; Cho, W. K.~T.; and Wang, S. 2016.
\newblock PEAR: a massively parallel evolutionary computation approach for
  political redistricting optimization and analysis.
\newblock \emph{Swarm and Evolutionary Computation}, 30: 78 -- 92.

\bibitem[{{LWV vs Commonwealth of Pennsylvania}(No. 159 MM 2018)}]{LWVPenn2018}
{LWV vs Commonwealth of Pennsylvania}. No. 159 MM 2018.

\bibitem[{Mattingly and Vaughn(2014)}]{mattingly2014redistricting}
Mattingly, J.~C.; and Vaughn, C. 2014.
\newblock Redistricting and the Will of the People.
\newblock \emph{arXiv preprint arXiv:1410.8796}.

\bibitem[{Mead(1992)}]{mead1992review}
Mead, A. 1992.
\newblock Review of the development of multidimensional scaling methods.
\newblock \emph{Journal of the Royal Statistical Society: Series D (The
  Statistician)}, 41(1): 27--39.

\bibitem[{Newling and Fleuret(2017)}]{newling2017sub}
Newling, J.; and Fleuret, F. 2017.
\newblock A sub-quadratic exact medoid algorithm.
\newblock In \emph{Artificial Intelligence and Statistics}, 185--193. PMLR.

\bibitem[{Pegden, Procaccia, and Yu(2017)}]{Pegden2017cake}
Pegden, W.; Procaccia, A.~D.; and Yu, D. 2017.
\newblock A partisan districting protocol with provably nonpartisan outcomes.
\newblock \emph{arXiv preprint arXiv:1710.08781}.

\bibitem[{Polsby and Popper(1991)}]{polsby1991third}
Polsby, D.~D.; and Popper, R.~D. 1991.
\newblock The third criterion: Compactness as a procedural safeguard against
  partisan gerrymandering.
\newblock \emph{Yale Law \& Policy Review}, 9(2): 301--353.

\bibitem[{{Rucho v. Common Cause}(No. 18-422, 588 U.S. 2019)}]{CaseRucho2019}
{Rucho v. Common Cause}. No. 18-422, 588 U.S. 2019.

\bibitem[{Ryan and Smith(2022)}]{RangeVotingSplitline}
Ryan, I.; and Smith, W.~D. 2022.
\newblock {Splitline districtings of all 50 states + DC + PR}.
\newblock \url{https://rangevoting.org/SplitLR.html}.
\newblock [Online; accessed 15-August-2019].

\bibitem[{Saran and Vazirani(1995)}]{saran1995finding}
Saran, H.; and Vazirani, V.~V. 1995.
\newblock Finding k cuts within twice the optimal.
\newblock \emph{SIAM Journal on Computing}, 24(1): 101--108.

\bibitem[{Validi and Buchanan(2022)}]{validi2022political}
Validi, H.; and Buchanan, A. 2022.
\newblock Political districting to minimize cut edges.
\newblock \emph{Mathematical Programming Computation}, 1--50.

\bibitem[{{Vieth v. Jubelirer}(No. 02-1580, 541 U.S. 267
  (2004))}]{CaseVieth2004}
{Vieth v. Jubelirer}. No. 02-1580, 541 U.S. 267 (2004).

\bibitem[{Wu et~al.(2015)Wu, Dou, Sleator, Frieze, and
  Miller}]{wu2015impartial}
Wu, L.~C.; Dou, J.~X.; Sleator, D.; Frieze, A.; and Miller, D. 2015.
\newblock Impartial redistricting: A markov chain approach.
\newblock \emph{arXiv preprint arXiv:1510.03247}.

\bibitem[{Zubrzycki(1972)}]{zubrzycki1972lectures}
Zubrzycki, S. 1972.
\newblock \emph{Lectures in probability theory and mathematical statistics},
  volume~38.
\newblock Elsevier Publishing Company.

\end{thebibliography}

\appendix
\newpage 
\appendix
\onecolumn

\section{Omitted Proof}\label{app:math}
We restate the proposition and give its proof:
\medoidkemeny* 
\begin{proof}
The first step in the proof follows from the  definition of the weighted sum of distances for a map $M$ from the set of all maps: 
\begin{align*}
    M^* & = \argmin\limits_{\map \in \MapsDist} \sum_{M' \in \MapsDist} \Big[ \Big( \sum_{\tau} \nu_{\tau,\map}  \Big) d(M,M') \Big]  \\ 
    & = \argmin\limits_{\map \in \MapsDist} \sum_{M' \in \MapsDist} \Big[ \Big( \frac{\sum_{\tau} \nu_{\tau,\map}}{\mathcal{T}}  \Big) d(M,M') \Big] \\ 
    & = \argmin\limits_{\map \in \MapsDist} \sum_{M' \in \MapsDist} \Big[ p_{\map} d(M,M') \Big] \\ 
    & =  \argmin\limits_{\map \in \MapsDist}  \underset{\map' \sim p_{\map'}}{\E}[d(\map,\map')]
\end{align*}
\end{proof}

We restate the next proposition and give its proof:
\distismetric*
\begin{proof}
Since $\theta(i,j)>0, \forall i,j \in V$, it is clear that $\dtheta(A_1,A_2)$ is non-negative, symmetric, and that it equals zero if and only if  $A_1=A_2$. The triangle inequality follows since $|A_1(i,j)-A_2(i,j)| =|A_1(i,j)-A_3(i,j)+A_3(i,j)-A_2(i,j)| \leq |A_1(i,j)-A_3(i,j)| +|A_3(i,j)-A_2(i,j)|$, therefore we have:
\begin{align*}
    & \dtheta(A_1,A_2)  = \frac{1}{2}\sum_{i,j \in V} \theta(i,j)|A_1(i,j)-A_2(i,j)| \\ 
    & \leq  \frac{1}{2}\sum_{i,j \in V} \theta(i,j) \big(|A_1(i,j)-A_3(i,j)| +|A_3(i,j)-A_2(i,j)|\big) \\ 
    & \leq \frac{1}{2}\sum_{i,j \in V} \theta(i,j) |A_1(i,j)-A_3(i,j)|\\ 
        &\qquad+ \frac{1}{2}\sum_{i,j \in V} \theta(i,j) |A_3(i,j)-A_2(i,j)| \\ 
    & = \dtheta(A_1,A_3) + \dtheta(A_2,A_3)
\end{align*}
\end{proof}

We restate the next theorem and give its proof:
\decompth* 
\begin{proof}
We begin with the following lemma: 
\begin{lemma}\label{ham_eq_l2}
For any $A_1,A_2$ that are binary matrices (entries either $0$ or $1$), with $\dthetasq(A_1,A_2)=\frac{1}{2}\sum_{i,j \in V} \theta(i,j)(A_1(i,j)-A_2(i,j))^2$, then we have that $\dtheta(A_1,A_2)=\dthetasq(A_1,A_2)$. 
\end{lemma}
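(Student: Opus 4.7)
The plan is to reduce the identity $d_\Theta(A_1,A_2) = d_{2,\Theta}(A_1,A_2)$ to a term-by-term comparison inside the two sums, since both quantities have the same prefactor $\tfrac{1}{2}\theta(i,j)$ and differ only in whether the discrepancy between $A_1(i,j)$ and $A_2(i,j)$ is measured by absolute value or by square. So it suffices to show, for each pair $(i,j)$, that $|A_1(i,j)-A_2(i,j)| = (A_1(i,j)-A_2(i,j))^2$.

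To see this, I would just observe that because $A_1(i,j),A_2(i,j)\in\{0,1\}$, the difference $A_1(i,j)-A_2(i,j)$ lies in $\{-1,0,1\}$. A one-line case check (or the general fact that $|x|=x^2$ for $x\in\{-1,0,1\}$) establishes the equality of $|\cdot|$ and $(\cdot)^2$ on this set.

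Summing the resulting pointwise equalities $\theta(i,j)|A_1(i,j)-A_2(i,j)| = \theta(i,j)(A_1(i,j)-A_2(i,j))^2$ over all $i,j\in V$ and multiplying by $\tfrac{1}{2}$ yields exactly $d_\Theta(A_1,A_2) = d_{2,\Theta}(A_1,A_2)$. There is no real obstacle here; the lemma is essentially a bookkeeping identity whose only content is the elementary observation that $|x|=x^2$ on $\{-1,0,1\}$. Its purpose is to let us swap absolute values for squares when computing distances between genuine (binary) adjacency matrices, which is what will enable the centroid-based decomposition in Theorem \ref{th:decomp_th} to be derived using a standard bias–variance style expansion for squared error.
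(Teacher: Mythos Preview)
Your proof is correct and follows exactly the same approach as the paper, which simply notes that the identity is immediate because $A_1$ and $A_2$ are binary. You have merely spelled out in detail the reason (namely that $|x|=x^2$ for $x\in\{-1,0,1\}$) that the paper leaves implicit.
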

\begin{proof}
The proof is immediate since $A_1$ and $A_2$ are binary.
\end{proof}
It may seem redundant to introduce a new definition $\dthetasq(.,.)$ since by Lemma \ref{ham_eq_l2}, they are equivalent. However, we will shortly be using $\dthetasq(.,.)$ over matrices which are not necessarily binary, clearly then we might have $\dtheta(A_1,A_2) \neq \dthetasq(A_1,A_2)$. 

We then introduce next lemma:
\begin{lemma}\label{vector_lemma}
For any two matrices (not necessarily binary), the following holds: 
\begin{align}
    \dthetasq(A_1,A_2) = \frac{1}{2} \norm{A^{\Theta}_1-A^{\Theta}_2}^2_2
\end{align}
where $A^{\Theta}_s(i,j)=\sqrt{\theta(i,j)} A_s(i,j), \forall s \in \{1,2\}$ and $\norm{A^{\Theta}_1-A^{\Theta}_2}^2_2$ is the square of the $\ell_2$-norm of the vectorized form of the matrix $(A^{\Theta}_1-A^{\Theta}_2)$.
\end{lemma}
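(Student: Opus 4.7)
The plan is to verify the identity by direct substitution, since the content of the lemma is essentially definitional. The key observation is that the weights $\theta(i,j)$ can be absorbed into each matrix as $\sqrt{\theta(i,j)}$, and squaring then restores the original weighting.

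First I would unpack the right-hand side: the vectorized squared $\ell_2$-norm is just
\begin{equation*}
\norm{A^{\Theta}_1-A^{\Theta}_2}^2_2 = \sum_{i,j \in V} \bigl(A^{\Theta}_1(i,j)-A^{\Theta}_2(i,j)\bigr)^2.
\end{equation*}
Next I would substitute the definition $A^{\Theta}_s(i,j)=\sqrt{\theta(i,j)}\, A_s(i,j)$ into each entry, using the distributivity of $\sqrt{\theta(i,j)}$ across the subtraction, to obtain $\sqrt{\theta(i,j)}(A_1(i,j)-A_2(i,j))$. Squaring then yields $\theta(i,j)(A_1(i,j)-A_2(i,j))^2$, which is valid because $\theta(i,j)>0$ so the square root is real.

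Finally I would multiply by $\tfrac{1}{2}$ and recognize the resulting sum as exactly the definition of $\dthetasq(A_1,A_2)$ given just before the lemma. Since each step is an equality rather than an inequality, no tightness argument is required, and the binary/non-binary distinction plays no role here (that assumption was only needed in Lemma \ref{ham_eq_l2} to identify $|x|$ with $x^2$).

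There is no real obstacle; the only thing to be careful about is bookkeeping on the factor of $\tfrac{1}{2}$ and making sure the summation ranges over all ordered pairs $(i,j) \in V \times V$ consistently on both sides, so that the symmetrization convention in the definition of $\dthetasq$ matches the vectorization convention used for $\norm{\cdot}_2$.
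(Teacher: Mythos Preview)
Your proposal is correct and mirrors the paper's own proof, which is the same three-line direct computation: start from the definition of $\dthetasq$, factor $\sqrt{\theta(i,j)}$ out of the squared difference, and recognize the result as $\tfrac{1}{2}\norm{A^{\Theta}_1-A^{\Theta}_2}^2_2$. The only cosmetic difference is that the paper goes left-to-right while you go right-to-left.
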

\begin{proof}
\begin{align*}
    \dthetasq(A_1,A_2) & = \frac{1}{2}\sum_{i,j \in V} \theta(i,j)(A_1(i,j)-A_2(i,j))^2 \\ 
    & = \frac{1}{2}\sum_{i,j \in V} (\sqrt{\theta(i,j)}A_1(i,j)-\sqrt{\theta(i,j)}A_2(i,j))^2 \\ 
    & = \frac{1}{2} \norm{A^{\Theta}_1-A^{\Theta}_2}^2_2
\end{align*}
\end{proof}

With the lemmas above, we can now prove the decomposition theorem: 
\begin{align*}
     \sum_{t=1}^T \dtheta(A_t,A')  & = \sum_{t=1}^T \dthetasq(A_t,A')  \ \ \ \ \ \ \text{(using Lemma \ref{ham_eq_l2})} \\ 
    & = \sum_{t=1}^T \frac{1}{2} \norm{A^{\Theta}_t-{A'}^{\Theta}}^2_2  \ \ \ \ \ \ \text{(using Lemma \ref{vector_lemma})} \\ 
    & = \frac{1}{2} \sum_{t=1}^T \norm{A^{\Theta}_t-\acT+\acT-{A'}^{\Theta}}^2_2 \\
    & = \frac{1}{2} \sum_{t=1}^T    \Big[(A^{\Theta}_t-\acT+\acT-{A'}^{\Theta})^\intercal (A^{\Theta}_t-\acT+\acT-{A'}^{\Theta})\Big] \\ 
    & = \sum_{t=1}^T  \frac{1}{2}  \Big[\norm{A^{\Theta}_t-\acT}^2_2 +\norm{\acT-{A'}^{\Theta}}^2_2 \Big]  + \left(\sum_{t=1}^T (A^{\Theta}_t - \acT) \right)^\intercal (\acT-{A'}^{\Theta}) \\
    & =  \sum_{t=1}^T \frac{1}{2} \norm{A^{\Theta}_t-\acT}^2_2 + \frac{T}{2} \norm{\acT-{A'}^{\Theta}}^2_2  + (T\acT-T\acT)^\intercal (\acT-{A'}^{\Theta}) \\ 
    & = \sum_{t=1}^T \dthetasq(A_t,\ac) + T \dthetasq(\ac,A') 
\end{align*}
Note that in the fourth line we take the dot product with the matrices being in vectorized form and that $\acT=\frac{1}{T}\sum_{t=1}^T A^{\Theta}_t$. Note that it follows that $\acT(i,j)=\sqrt{\theta(i,j)}\ac(i,j)$. 
\end{proof}

We restate the next proposition and give its proof:
\acpopprop*
\begin{proof}
By definition of the redistricting adjacency matrices and condition (\ref{samp_assump}) for $\iid$ sampling we have that:
\begin{align*}
    & \acpop(i,j) = \underset{A_t \sim \MapsDist}{\E}[A_t(i,j)]  \\ 
    & = (1) \Pr[i \text{ and } j \text{ are in the same district}]  + (0) \Pr[i \text{ and } j \text{ are not in the same district}] \\
    & = \Pr[i \text{ and } j \text{ are in the same district}]  \\
\end{align*}
\end{proof}

We restate the next proposition and give its proof:
\sampcompone*
\begin{proof}
For a given $i,j \in V$ by the Hoeffding bound we have that: 
\begin{align*}
    &\Pr[|\ac(i,j)-\acpop(i,j)| \leq \epsilon]  \ge 1 - 2 e^{-2\epsilon^2 T}  \\
    & \ge 1-2e^{-2\epsilon^2 \frac{1}{\epsilon^2}\ln{\frac{n}{\delta}}} 
    \ge 1-2(e^{2\ln{\frac{n}{\delta}}} )^{-1} \ge 1-2\frac{\delta^2}{n^2}
\end{align*}
Now we calculate the following event:
\begin{align*}
    & \Pr(\{\forall i,j \in V: |\ac(i,j)-\acpop(i,j)| \leq \epsilon \}) \\
    & = 1 - \Pr(\{\exists i,j \in V: |\ac(i,j)-\acpop(i,j)| > \epsilon \}) \\ 
    & \ge 1- \sum_{i,j \in V} 2 \frac{\delta^2}{n^2} \ge 1 - 2 \delta^2 \frac{(\frac{n^2-n}{2})}{n^2} \ge 1 - \delta^2 \ge 1-\delta \ \ \ \ \ \ \text{( since $\delta \in (0,1)$)}
\end{align*}
Now we prove the second part. By applying the previous result with $\epsilon$ set to $\frac{\sqrt{\epsilon}}{\sqrt{\rho} n}$, we get that with probability at least $1-\delta$,  $|\ac(i,j)-\acpop(i,j)| \leq \frac{\sqrt{\epsilon}}{\sqrt{\rho} n^2}$. It follows that:
\begin{align*}
    \dthetasq(\ac,\acpop) & = \frac{1}{2} \sum_{i,j\in V} \theta(i,j) (\ac(i,j)-\acpop(i,j))^2 \\
    & \leq \frac{1}{2} \sum_{i,j\in V} \theta(i,j) \Big( \frac{\sqrt{\epsilon}}{ \sqrt{\rho} n} \Big)^2 \leq \frac{1}{2} \sum_{i,j\in V} \frac{\epsilon}{n^2} \\
    &\leq \frac{1}{2} \frac{\epsilon}{ n^2} \frac{n^2-n}{2} \leq \epsilon
\end{align*}
\end{proof}

We restate the next theorem and give its proof:
\minkcut* 
\begin{proof}
From Theorem \ref{th:decomp_th}, the population medoid is a valid redistricting map $A$ for which $\dthetasq(A,\acpop)$ is minimized. Note that since $A$ is a redistricting map, unlike $\acpop$ it must be a binary matrix. Therefore, $|A(i,j)-\acpop(i,j)|= (1-\acpop(i,j))+(2\acpop(i,j)-1)(1-A(i,j))$, where this identity can be verified by plugging 0 or 1 for $A(i,j)$ and seeing that it leads to an equality. Define the matrix $B$ as a ``complement'' of $A$. Specifically, $B(i,j)=1-A(i,j)$. It follows that $B(i,j)=1$ if and only if $i$ and $j$ are in different partitions and $B(i,j)=0$ otherwise. Clearly, $B$ is a binary matrix. We can obtain the following:
\begin{align*}
    & \dthetasq(A,\acpop)
     = \frac{1}{2} \sum_{i,j \in V} \theta(i,j) (A(i,j)-\acpop(i,j))^2 \\
    &\quad = \frac{1}{2} \sum_{i,j \in V} \theta(i,j) \big( (1-\acpop(i,j))+(2\acpop(i,j)-1)(1-A(i,j)) \big)^2 \\ 
    &\quad = \frac{1}{2} \sum_{i,j \in V} \theta(i,j) \big( (1-\acpop(i,j))+(2\acpop(i,j)-1)B(i,j) \big)^2 \\ 
    &\quad = \frac{1}{2} \sum_{i,j \in V} \theta(i,j)  \big( (1-\acpop(i,j))^2+2 (1-\acpop(i,j)) (2\acpop(i,j)-1)B(i,j) + (2\acpop(i,j)-1)^2 B^2(i,j) \big)\\ 
    &\quad = \frac{1}{2} \sum_{i,j \in V} \theta(i,j) \big( (1-\acpop(i,j))^2+2 (1-\acpop(i,j)) (2\acpop(i,j)-1)B(i,j) + (2\acpop(i,j)-1)^2 B(i,j) \big)\\ 
    &\quad = \Big[\frac{1}{2} \sum_{i,j \in V} \theta(i,j)  (\acpop^2(i,j)-2 \acpop(i,j) + 1 )  \Big] - \Big[ \frac{1}{2} \sum_{i,j \in V} \theta(i,j) \big(1-2\acpop(i,j) \big) B(i,j)  \Big]
\end{align*} 
 
Note that the first sum in the last equality is a constant and has no dependence on $B$. Hence to minimize $\dthetasq(A,\acpop)$, we maximize the following: 
\begin{align}
    \max_{B} & \sum_{i,j \in V} s(i,j)  B(i,j) \label{eq:minkcut}  \\ 
        \text{s.t. } & B \text{ is a $k$ partition that leads to a valid redistricting map}
\end{align}
where the weight $s(i,j)$ is equal to $s(i,j)=\frac{1}{2} \theta(i,j) \big(1-2\acpop(i,j) \big)$. Clearly, this is a constrained max $k$-cut instance where the partition has to be a valid redistricting map.  

\end{proof}


Theorem \ref{th:negative_2d} shows a negative result for obtaining the population medoid of a set of redistricting maps. Before we show its proof, we show the following theorem which proves a similar result over points in a 2-dimensional Euclidean space. We note that both theorems hold even if the population (true) medoid is sampled with non-zero probability: 
\begin{theorem}\label{th:negappendix}
There exists a distribution over a set of points $\mathcal{P}$ in the 2-dimensional Euclidean space such that given $T$ many $\iid$ samples $\{x_1,\dots,x_T\}$ the following holds: (1) If $T \ge \text{poly}(\epsilon,\ln(1/\rho))$ then with probability at least $1-\rho$ we have $\norm{\frac{1}{T} \sum_{t=1} x_t - \E[x]}^2_2 \leq \epsilon$. However, it is also the case that for any number of samples $T$, we have: (2) $\Pr[\min_{x\in \{x_1,\dots,x_T\}} d(x,x^*) \ge 0.999] \ge \frac{2}{3}$ and (3) $\Pr[\min_{x \in } f(x) \ge 1.2 f(x^*)] \ge \frac{2}{3}$ where $x^*$ is the population medoid and $f(.)$ is the medoid cost function.  
\end{theorem}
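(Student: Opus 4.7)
The plan is to exhibit a distribution on a bounded subset of $\mathbb{R}^2$ whose centroid concentrates at the standard $O(1/\sqrt{T})$ rate but whose medoid is an atom of vanishing mass that the sampler almost always misses. Concretely, fix $T$ (the quantifier in the theorem lets the distribution depend on $T$) and let $\mathcal{P} = \{\mathbf{0}\} \cup S^1$, where $S^1 = \{x \in \mathbb{R}^2 : \|x\|_2 = 1\}$. Take the distribution to be the mixture $p\,\delta_{\mathbf{0}} + (1-p)\mu$, where $\mu$ is the uniform (arc-length) measure on $S^1$ and $p = 1/(3T)$. Note that $\mathcal{P}$ is bounded, the mean is $\mathbf{0}$ by symmetry, and the medoid $\mathbf{0}$ is sampled with strictly positive probability $p$, as required.

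Next I would verify that $x^* = \mathbf{0}$ is the population medoid by computing the cost $f(x) = \mathbb{E}_Y[\|x - Y\|]$ on $\mathcal{P}$. For $x = \mathbf{0}$ every circle point is at distance $1$, so $f(\mathbf{0}) = 1 - p$. For $x \in S^1$, parameterise by the relative angle $\theta \in [0,\pi]$, which gives chord length $2|\sin(\theta/2)|$ and yields the classical identity $\mathbb{E}_{Y \sim \mu}[\|x - Y\|] = 4/\pi$, hence $f(x) = p + (1-p)(4/\pi)$. Since $4/\pi > 1$, one has $f(\mathbf{0}) < f(x)$ for every $x \in S^1$, so $\mathbf{0}$ is the unique medoid in $\mathcal{P}$.

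For claim (1), each coordinate of $X$ lies in $[-1,1]$, so Hoeffding's inequality combined with a union bound over the two coordinates yields $\|\bar X_T - \mathbb{E}[X]\|_2^2 \le \epsilon$ with probability at least $1 - \rho$ whenever $T = \Omega(\epsilon^{-1}\log(1/\rho))$; this bound is independent of $p$ because the support radius is uniformly $1$. For (2) and (3), the governing event is $\mathcal{E} = \{\text{no sample equals } \mathbf{0}\}$, which has probability $(1-p)^T \ge 1 - pT = 2/3$ by Bernoulli's inequality and the choice $p = 1/(3T)$. On $\mathcal{E}$ every sample lies on $S^1$, so $\min_t \|x_t - x^*\|_2 = 1 \ge 0.999$, proving (2); and $\min_t f(x_t) = p + (1-p)(4/\pi)$, so the ratio to $f(x^*) = 1-p$ is $\tfrac{p}{1-p} + \tfrac{4}{\pi} \ge \tfrac{4}{\pi} \approx 1.273 \ge 1.2$, proving (3).

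The main obstacle is simply getting a clean multiplicative gap: the cost ratio must exceed $1.2$ and the distance must exceed $0.999$, and these thresholds are tight enough that naive discrete analogues of the circle can fail. For example, placing mass only on the four compass points $(\pm 1,0),(0,\pm 1)$ gives an average neighbour distance of $(2 + 2\sqrt{2})/4 \approx 1.207$, uncomfortably close to the $1.2$ threshold; using the continuous uniform measure on $S^1$ yields the cleaner constant $4/\pi \approx 1.273$ and removes any such worry. A secondary check is that the tuning $p = 1/(3T)$ does not contaminate the centroid bound in (1), which is immediate since Hoeffding's bound depends only on the support radius and not on $p$, so the centroid's convergence rate is unaffected by making the atom at the medoid vanishingly thin.
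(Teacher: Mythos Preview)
Your proposal is correct and follows the same blueprint as the paper: put almost all the mass on the unit circle and a vanishing atom at (or near) the center, so that the center is the medoid but is almost never sampled. The differences are purely in execution. The paper uses a finite five-point support --- four equally spaced points $\gamma_1,\dots,\gamma_4$ on the unit circle each with mass $(1-\delta)/4$, and a fifth point $\gamma_5$ at distance $\ell=1/1000$ from the center with mass $\delta$ --- whereas you use the continuous uniform measure on $S^1$ together with an atom at the exact origin with mass $p=1/(3T)$. Your construction yields the cost ratio $4/\pi\approx 1.273$, comfortably above the $1.2$ threshold; the paper's discrete version gives $(1+\sqrt{2})/(2(1+\ell))\approx 1.206$, which is exactly the ``four compass points'' calculation you flag as uncomfortably tight. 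Both proofs let the rare atom's mass depend on $T$ (your $p=1/(3T)$, the paper's $\delta\le 1-\sqrt[T]{2/3}$), so the quantifier reading is the same. The paper's finite-support version aligns better with the discrete redistricting setting it is meant to foreshadow; your continuous version is analytically cleaner and has more slack in the constants.
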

\begin{proof}
 \begin{figure}
   \centering
   \includegraphics[scale=0.35]{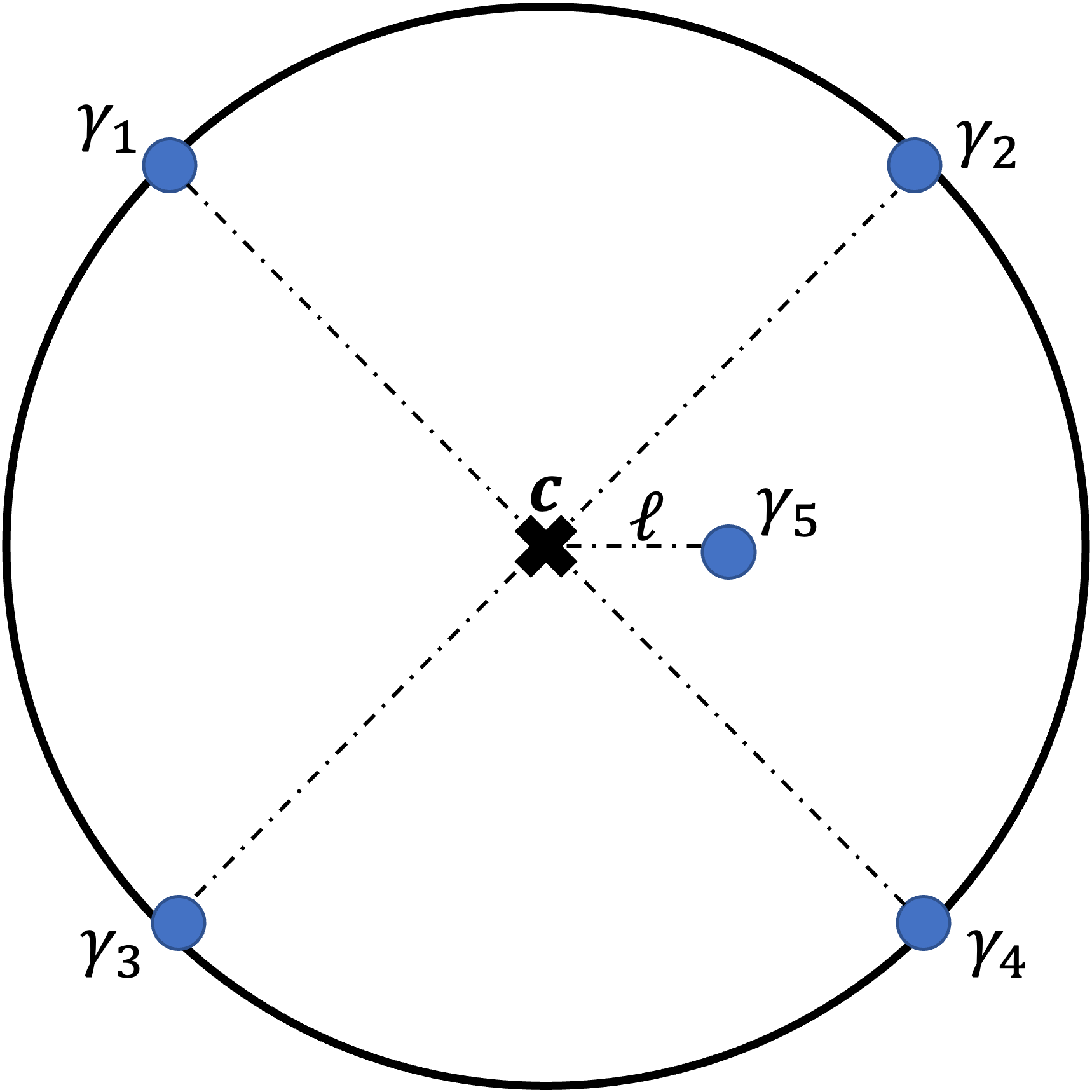}
    \caption{Points $\gamma_1,\gamma_2,\gamma_3$, and $\gamma_4$ all lie on the circle and the angle between any two adjacent points is $90^{\circ}$. Point $\gamma_5$ is a distance $\ell$ from the circle center $c$. The center $c$ is also the origin point of the 2-$d$ plane $(0,0)$.}
    \label{fig:2dvarap}
 \end{figure}

Consider the set of points $\mathcal{P} =\{\gamma_1,\gamma_2,\gamma_3,\gamma_4,\gamma_5\}$ shown in Figure \ref{fig:2dvarap}. Specifically, points $\gamma_1,\gamma_2,\gamma_3$ and $\gamma_4$ lie on a circle of radius $1$ at an equal separation. Point $\gamma_5$ lies closer to the center of the circle $c$ at a distance of $\ell$. Further, each point $\gamma_i$ has a probability $p_i$ of being sampled. Specifically, $p_5=\delta$ and $p_1=p_2=p_3=p_4=\frac{1-\delta}{4}$. 

The centroid of the distribution is the expected value, i.e. $\E[x]= \sum_{j=1}^{5} p_j \gamma_j$. It is straightforward to show that given $\iid$ samples $\{x_1,x_2,\dots,x_T\}$ where $T=\Omega(\frac{\ln(1/\rho)}{\epsilon})$, then $\norm{\frac{1}{T}\sum_{t=1}^T \hat{x}_t - \E[x]}^2_2 \leq \epsilon$ with probability at least $1-\rho$. This follows by applying the generalized Hoeffding inequality (see \cite{ashtiani2016clustering} for example). This proves the first statement.

Now, given a candidate medoid point $\gamma'$, the medoid cost function is $f(\gamma')=\sum_{j=1}^{5} p_j d(\gamma',\gamma_j)$. Note that the medoid is the value of $\gamma'$ which minimizes $f(\gamma')$ and also $\gamma'$ must be an element of the set, i.e. $\gamma' \in \mathcal{P}$. It is easy to verify that $f(\gamma_1)= \frac{1-\delta}{4} [d(\gamma_1,\gamma_1)+d(\gamma_1,\gamma_2)+d(\gamma_1,\gamma_3)+d(\gamma_1,\gamma_4)]+ \delta d(\gamma_1,\gamma_5) > \frac{1-\delta}{4} [0 + 2\sqrt{2} + 2] = \frac{1-\delta}{2} (1+\sqrt{2})$. By symmetry it also follows that $f(\gamma_2),f(\gamma_3),f(\gamma_4)$ are each lower bounded by $\frac{1-\delta}{2} (1+\sqrt{2})$. 

On the other hand, we have $f(\gamma_5) = \frac{1-\delta}{4} \sum_{j=1}^{4} d(\gamma_5,\gamma_j) \leq \frac{1-\delta}{4} \sum_{j=1}^{4} [d(\gamma_5,c) + d(c,\gamma_j)] \leq \frac{1-\delta}{4} (4(1+\ell)) = (1-\delta) (1+\ell)$. Further, for any $i\neq 5: \frac{f(\gamma_i)}{f(\gamma_5)} \ge \frac{\frac{1-\delta}{2} (1+\sqrt{2})}{(1-\delta)(1+\ell)}= \frac{(1+\sqrt{2})}{2(1+\frac{1}{1000})} > 1.2$ by setting $\ell=\frac{1}{1000}$, and therefore $\gamma_5$ is the medoid. Further, $\forall i \in \{1,2,3,4\}: d(\gamma_i,\gamma_5) \ge d(\gamma_i,c)-d(c,\gamma_5) \ge 1-\ell \ge 0.999$. Therefore, to prove the second and third statements it is sufficient to prove that  $\gamma_5$ would be sampled with probability at most $\frac{1}{3}$ in $T$ many $\iid$ samples: 
\begin{align*}
     & \Pr[\text{Sampling $\gamma_5$ in $T$ $\iid$ samples}]  \\
     & = 1- \Pr[\text{Not sampling $\gamma_5$ in $T$ $\iid$ samples}] \\
     & = 1 - (1-\delta)^T \leq \frac{1}{3} \\
     & \iff \delta \leq 1-\sqrt[T]{\frac{2}{3}} 
\end{align*}
Therefore, by setting $\delta \leq 1-\sqrt[T]{\frac{2}{3}}$ the distribution satisfies all three statement simultaneously.   
\end{proof}

Now we restate Theorem \ref{th:negative_2d} and show its proof:
\negativeredistmedoid*
\begin{proof}
\begin{figure}[h]
  \centering
  \includegraphics[scale=0.45]{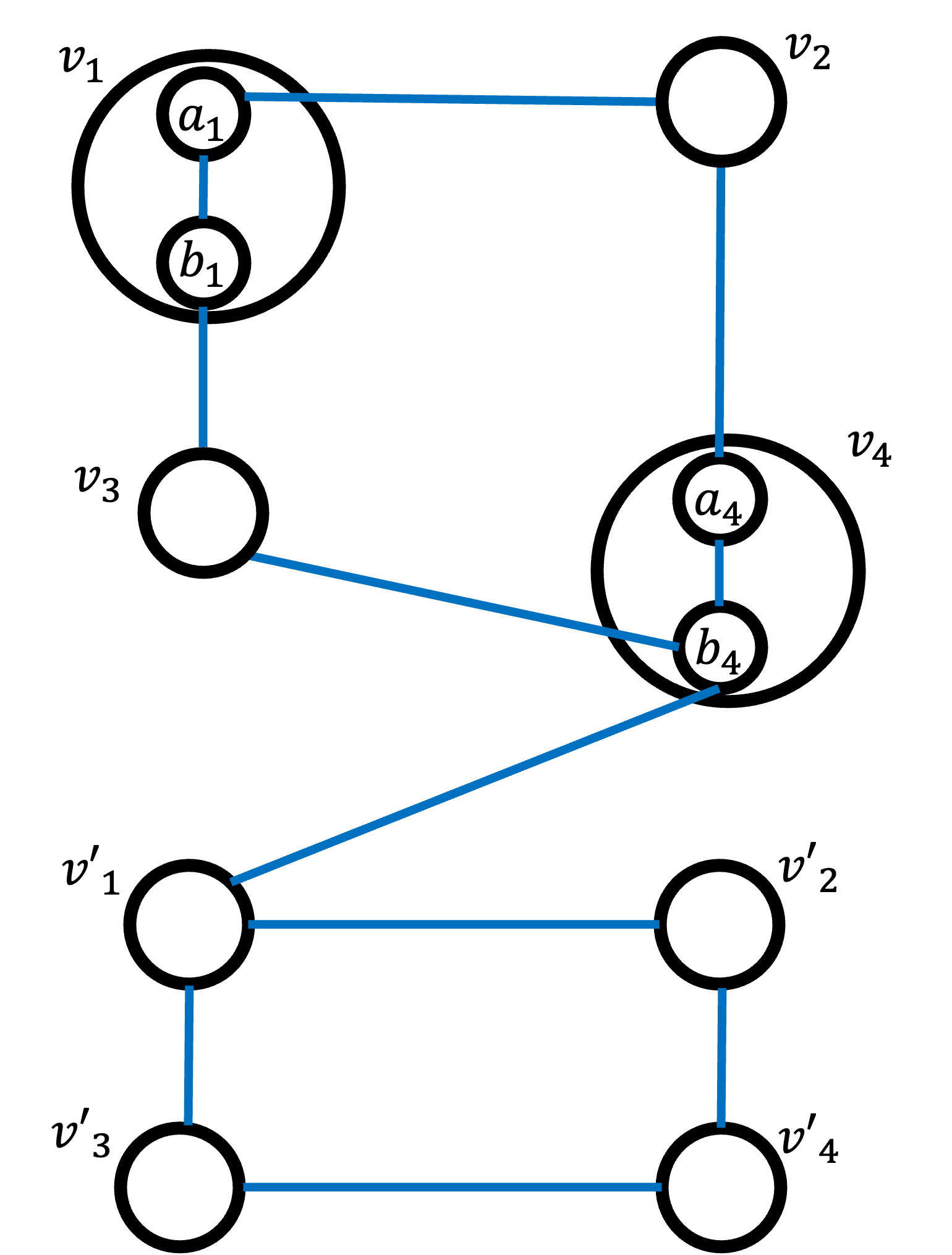}
  \caption{The graph shows a hypothetical state. Blue edges indicate that the vertices are adjacent geographically. All vertices have a weight (population) of 1, except for states $\{a_1,b_1,a_4,b_4\}$ which have a weight of $\frac{1}{2}$.}
  \label{fig:state_hypo_2}
\end{figure}
Consider the hypothetical state shown in Figure \ref{fig:state_hypo_2} where vertices $v_1$ and $v_4$ are further subdividing into two vertices each. We wish to divide the state into $2$ districts ($k=2$). Since each vertex has a weight of $1$, except vertices $\{a_1,b_1,a_4,b_4\}$ which each have a weight of $\frac{1}{2}$, then each district should have a population of $2$ to enforce the equal population rule with a tolerance less than $0.25$. 

Denoting the set of all vertices by $V$ and letting $V'=\{a_1,b_1,a_4,b_4\}$, then the weight parameters of our weighted distance measure are defined as follows: 
\begin{align*}
  \theta(i,j) =
  \begin{cases}
   \epsilon & \text{if $i \ \& \ j \in V'$} \\
   \frac{1}{2} & \text{if $i \in V-V', j \in V'$ or $i \in V', j \in V-V'$} \\
    1 & \text{otherwise}
  \end{cases}
\end{align*}

\begin{figure}[h]
  \centering
  \includegraphics[scale=0.26]{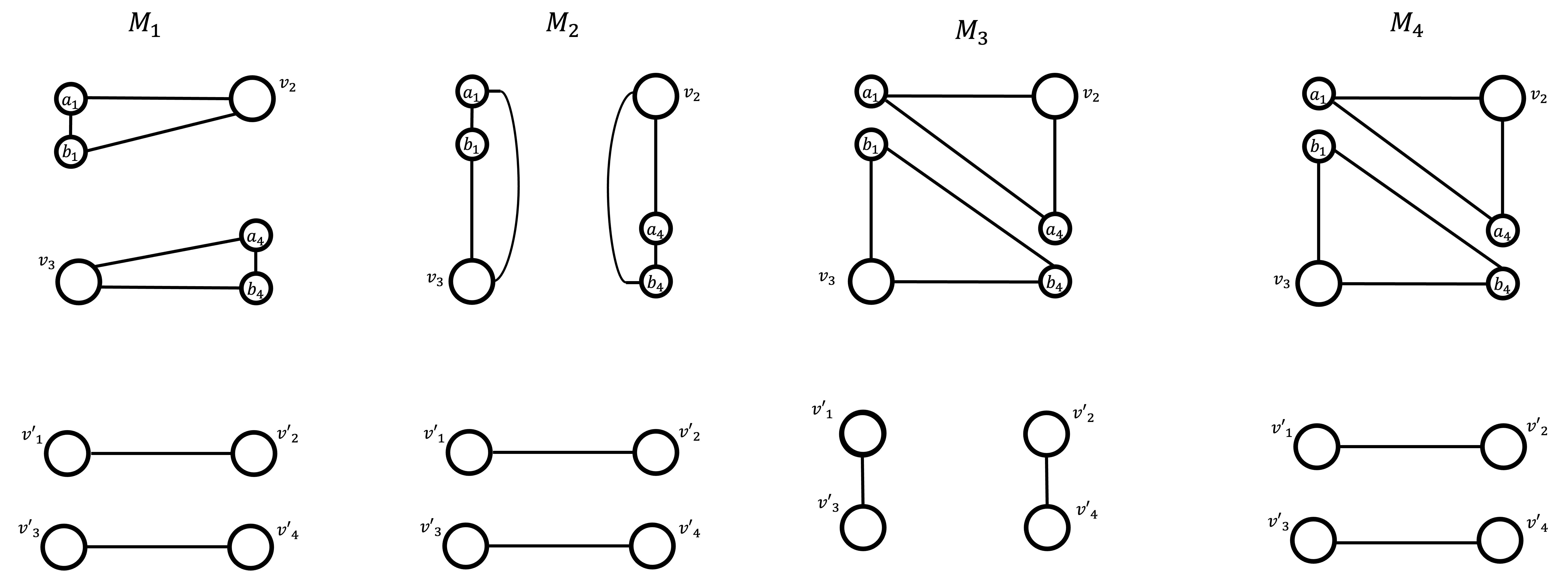}
  \caption{Maps $M_1, M_2, M_3$, and $M_4$. Vertices in the same district are connected with edges.}
  \label{fig:M_all}
\end{figure}
Where $0 < \epsilon\leq 1$. Now, consider the maps $M_1, M_2, M_3$, and $M_4$ shown in Figure \ref{fig:M_all}. Based on the definition of the weighted distance measure, it is not difficult to see that given maps $M_s$ and $M_t$, then $\dtheta(M_s,M_t)$ can be computed visually by drawing the adjacent graphs of $M_s$ and $M_t$ and then finding the minimum number of edges that have to be deleted and added to $M_s$ to produce $M_t$ and adding the weighted $\theta(i.j)$ of these edges. By following this procedure, we can show that $\dtheta(M_1,M_3)=6+4\epsilon$ for example as shown in Figure \ref{fig:M1_M3_dist}. Here we list all distances: 
\begin{align*}
    & \dtheta(M_1,M_2) = 4 \\
    & \dtheta(M_1,M_4) = \dtheta(M_2,M_4) = 2+4\epsilon \\
    & \dtheta(M_1,M_3) = \dtheta(M_2,M_4) = (2+4\epsilon) + 4 = 6+4\epsilon \\
    & \dtheta(M_3,M_4) =  4 \\
\end{align*}
Given a map $M$, the medoid cost function is defined as $f(M)=\sum_{M' \in \MapsDist} p_{M'} d(M,M')$. Let the probabilities for the redistricting maps be assigned as follows: $p_1=p_2=p_3=\frac{1-\delta}{3}$ whereas $\Pr[\text{Sampling a Map $M \notin \{M_1,M_2,M_3\}$}] = \delta>0$. Accordingly, $p_5 \leq \delta$. 

Further, since $\theta(i,j)\leq 1 \ ,\forall i,j \in V$, then maximum distance between any redistricting maps $D$  can be upper bounded by the highest number of edges that can be deleted and added from one map to produce another, therefore $D\leq 2 {|V|\choose 2}=|V|(|V|-1)=10\times9=90$ since $|V|=10$ (see Figure \ref{fig:state_hypo_2}). The medoid cost function can be lower bounded for $M_1,M_2,$ and $M_3$  and upper bounded for $M_4$ as shown below: 
\begin{align*}
    & f(M_1)  > \frac{1-\delta}{3} [d(M_1,M_2) + d(M_1,M_3)] = \frac{1-\delta}{3} (10+4\epsilon) \\ 
    & f(M_2)  > \frac{1-\delta}{3} [d(M_1,M_2) + d(M_2,M_3)] = \frac{1-\delta}{3} (10+4\epsilon) \\ 
    & f(M_3)  > \frac{1-\delta}{3} [d(M_1,M_3) + d(M_2,M_3)] = \frac{1-\delta}{3} (12+8\epsilon) \\ 
    & f(M_4)  < \frac{1-\delta}{3} [d(M_1,M_4) + d(M_2,M_4) + d(M_3,M_4)] + \delta D = \frac{1-\delta}{3} (8+8\epsilon) + 90 \delta \leq  \frac{1-\delta}{3} (9+8\epsilon)\\ 
\end{align*}
Where the last inequality was obtained by setting $\delta < \frac{1}{271}$ since it follows that $90 \delta < \frac{1-\delta}{3}$. From the above bounds it follows that the population medoid cannot cannot be $M_1, M_2$, or $M_3$. 

 Set $\epsilon = \frac{1}{1000}$ and $\delta \leq \frac{1}{1000} <\frac{1}{271}$ and with $1-(p_1+p_2+p_3)=\delta$, then $\frac{1-\delta}{3} (1-4\epsilon)>0.331$ and $\frac{(10+4\epsilon)}{(9+8\epsilon)}>1.11$. With the population medoid denoted by $M^*$, then we have: $\min\limits_{i \in \{1,2,3\}} \frac{f(M_i)}{f(M^*)} \ge \min\limits_{i \in \{1,2,3\}} \frac{f(M_i)}{f(M_4)} \ge \frac{\frac{1-\delta}{3}(10+4\epsilon)}{\frac{1-\delta}{3} (9+8\epsilon)} > 1.11$. Further, it follows by the triangle inequality that $\forall i \in \{1,2,3\}: f(M_i) \leq f(M^*) + d(M_i,M^*)$, thus $d(M_i,M^*) \ge \frac{1-\delta}{3} (1-4\epsilon)$, since otherwise $f(M^*) + d(M_i,M^*) \leq f(M_4) + d(M_i,M^*) < \frac{1-\delta}{3} (9+8\epsilon) + \frac{1-\delta}{3} (1-4\epsilon) = \frac{1-\delta}{3} (10+4\epsilon)$ which would be a contradiction. 
 
 From the above we have shown that, $\forall i \in \{1,2,3\}: d(M_i,M^*) \ge \frac{1-\delta}{3} (1-4\epsilon) > 0.331$ and  $\min\limits_{i \in \{1,2,3\}} \frac{f(M_i)}{f(M^*)} \ge 1.11$, therefore to prove parts (1) and (2) of the theorem it is sufficient to upper bound the probability of sampling a map that is not in  $\{M_1,M_2,M_3\}$ in $T$ \textbf{iid} samples by $\frac{1}{3}$. This leads to the following:
\begin{align*}
    &  \Pr[\text{Obtaining a map $M \notin \{M_1,M_2,M_3\}$ in a given $T$ $\ \iid$ samples}] \\ 
    & = 1 - \Pr[\text{No map $M \in \{M_1,M_2,M_3\}$ in the given $T$ $\ \iid$ samples}] \\ 
    & = 1 - (1-\delta)^T  \leq \frac{1}{3} 
\end{align*}
Therefore, from the above we should have $\delta = \min\{ \frac{1}{1000}, 1-\sqrt[T]{\frac{2}{3}}\}$ to satisfy both parts of the theorem.   


\begin{figure}[h]
  \centering
  \includegraphics[scale=0.3]{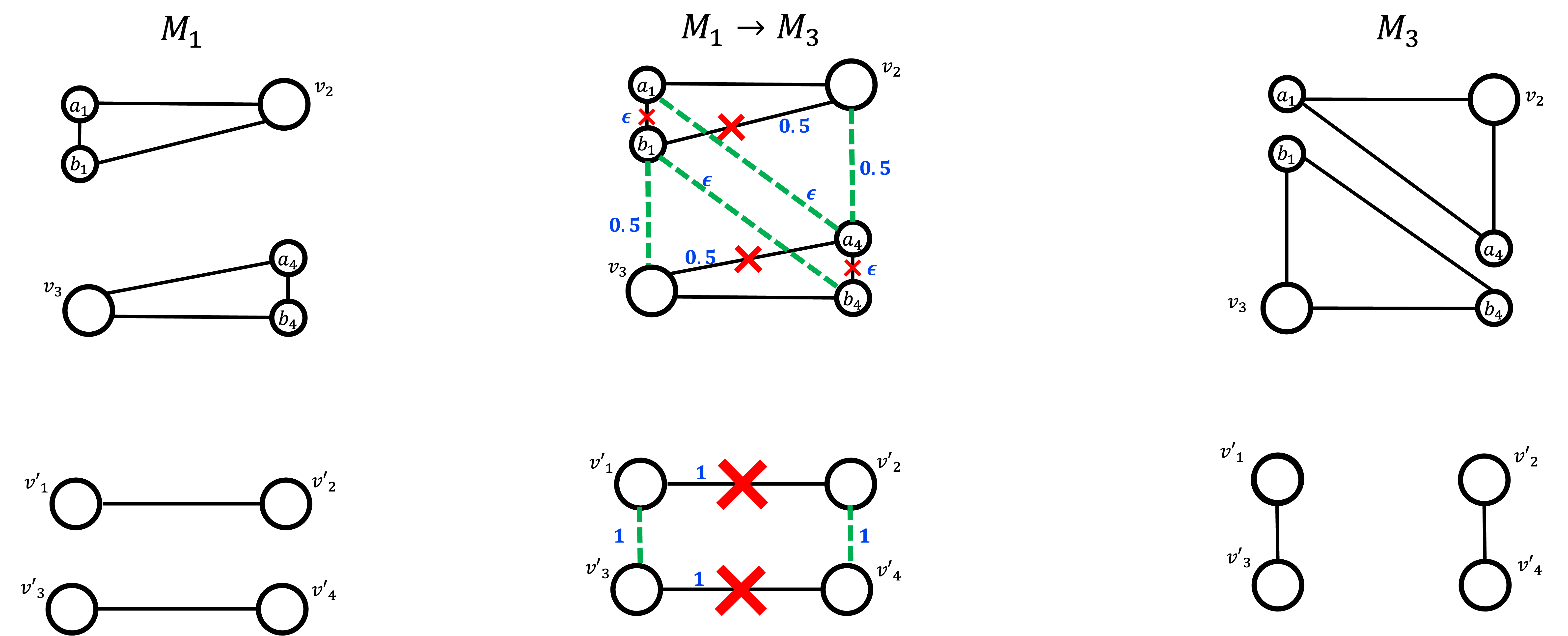}
  \caption{The first map is $M_1$ and the last is $M_3$. The middle map shows the edges the should be deleted from $M_1$ (marked with \textcolor{red}{X}) and the edges should be added to $M_1$ (dashed green edges) to produce $M_3$. The weight of each edge that is deleted or added is shown next to it in blue. By adding the weights we get that $\dtheta(M_1,M_3)=6+4\epsilon$. }
  \label{fig:M1_M3_dist}
\end{figure}

\end{proof}

\paragraph{Remark:} In both theorems \ref{th:negappendix} and \ref{th:negative_2d} the probability of ``failure'' is set to $\frac{2}{3}$ but this is arbitrary as we can make it arbitrarily large by choosing smaller values of $\delta$. But our objective was simply to show that no sampled map would converge to the population medoid or would have a medoid cost function value that converges to the value of medoid cost function of the population medoid. Further, both theorems would hold if the  population medoid is sampled with probability zero, but in our proofs we allowed the population medoid to be sampled with non-zero probability to show that the negative result would still hold even if we were to assume that the population medoid is sampled with non-zero probability.

\section{Additional Experimental Results}\label{app:exps}

\subsection{Convergence to the centroid}\label{subsec:conv_centroid}
 As a reminder to further test the robustness of our results and see the convergence, all calculations are done 3 times for each state, each time starting from a specific seed map\footnote{For MD we use only two seeds and do the calculations twice instead of three times.}. For example, we estimate the centroid three times, by sampling starting from seed maps $s_1$, $s_2$, and $s_3$\footnote{Seed maps are either already enacted maps or produced using optimization functions from the GerryChain toolbox: \textcolor{blue}{https://github.com/mggg/GerryChain}.} and as a result we end up with three centroids $c_1$, $c_2$, and $c_3$. Here we further verify the convergence of the centroids by showing that the final estimated centroids are close to each other. In fact, if we denote the smallest $\dthetasq(.,.)$ distance between a sampled map and any of the centroids $c_1$, $c_2$, or $c_3$ by $\dthetasq_{\min}$. Then we consistently find --for all states (NC,PA,MD) and all distance measures (unweighted and population-weighted)-- that for any two estimated centroids $c_i$ and $c_j$ that $\dthetasq(c_i,c_j) \leq \dthetasq_{\min}$ by \emph{at least $2$ or $3$ orders of magnitude}. This is strong evidence that while the differences in the estimation may be large in value, they are very small relative to the distances between other maps. Therefore, at this scale the estimation error is small. We note moreover the while we used $200,000$ samples to estimate the centroids for NC, we use only $50,000$ for PA and MD in each seed run and interestingly find that $50,000$ samples are sufficient. Table \ref{table:centroid_convergence} shows the maximum $\dthetasq(.,.)$ distance between any two estimated centroid and $\dthetasq_{\min}$. 

\begin{table}[h]
\caption{Maximum Distance Between Any Two Centroids vs Minimum Distance between a Centroid and a Sampled Map} \label{table:centroid_convergence}
\begin{center}
\begin{tabular}{llll}
\textbf{STATE} & \textbf{Distance Measure} & \textbf{$\max\limits_{i,j \in \{1,2,3\}} \dthetasq(c_i,c_j)$} & \ \ \ \ \ \ $\dthetasq_{\min}$ \\
\hline \\
NC         & Unweighted &                   $1.2051\times 10^{2}$   & \ \ \ \ \ \ $1.051030\times 10^{5}$ \\
NC         & Population-Weighted &          $1.5446 \times 10^{9}$ & \ \ \ \ \ \  $1.333014\times 10^{12}$ \\
PA         & Unweighted &                   $5.0794\times 10^{3}$ & \ \ \ \ \ \  $1.099175\times 10^{6}$ \\
PA         & Population-Weighted &          $1.1090\times 10^{10}$ & \ \ \ \ \ \  $ 2.036627\times 10^{12}$ \\
MD         & Unweighted &                   $2.5067\times 10^{2}$ & \ \ \ \ \ \  $5.033550\times 10^{4}$ \\
MD         & Population-Weighted &          $2.7067\times 10^{9}$ & \ \ \ \ \ \  $5.015410\times 10^{11}$ \\
\end{tabular}
\end{center}
\end{table}

\subsection{Distance Histograms and Detecting Gerrymandered Maps}
Here we produce the histogram maps for all states and all distance measures. Each histogram is produced for a given state and distance measure starting from a specific centroid and by sampling from its corresponding seed, e.g. we sample maps starting from seed $s_1$ and calculate $\dthetasq(M,c_1)$ between a sampled map $M$ and centroid $c_1$ to construct one histogram. We find similar observations to Figure \ref{fig:dist_hist}, indeed we see that the shape of the histograms is stable across all states, distances, and centroids. I.e., all of the distance histograms peak in the middle and fall away from the middle similar to a normal distribution. Further, our observations about the gerrymandered maps of NC (Figure \ref{fig:NC_all_histograms}) still hold. In addition, we show similar observations in PA where we find that the gerrymandered map that was struck down by the state supreme court \cite{LWVPenn2018} is an outlier in terms of distance. On the other hand, the remedial map has a much smaller distance and appears near the peak of the histogram. 

Tables \ref{table:nc_unweighted_distances}, \ref{table:nc_weighted_distances}, \ref{table:pa_unweighted_distances}, and \ref{table:pa_weighted_distances} list percentile distances for the above mentioned maps in NC and PA according to both unweighted and weighted distance measures. \emph{We see that the three gerrymander maps (the 2016 and 2011 enacted maps for NC and the 2011 enacted map for PA) are all above the 99th percentile for both distance measures and all three seeds}. On the other hand, the NC Judge's map and PA remedial map are near the peak of the histogram. In addition, the percentiles for these maps are fairly consistent across seeds for any given combination of state and distance measure.

\begin{figure*}[h!]
  \centering
  \includegraphics[scale=0.28]{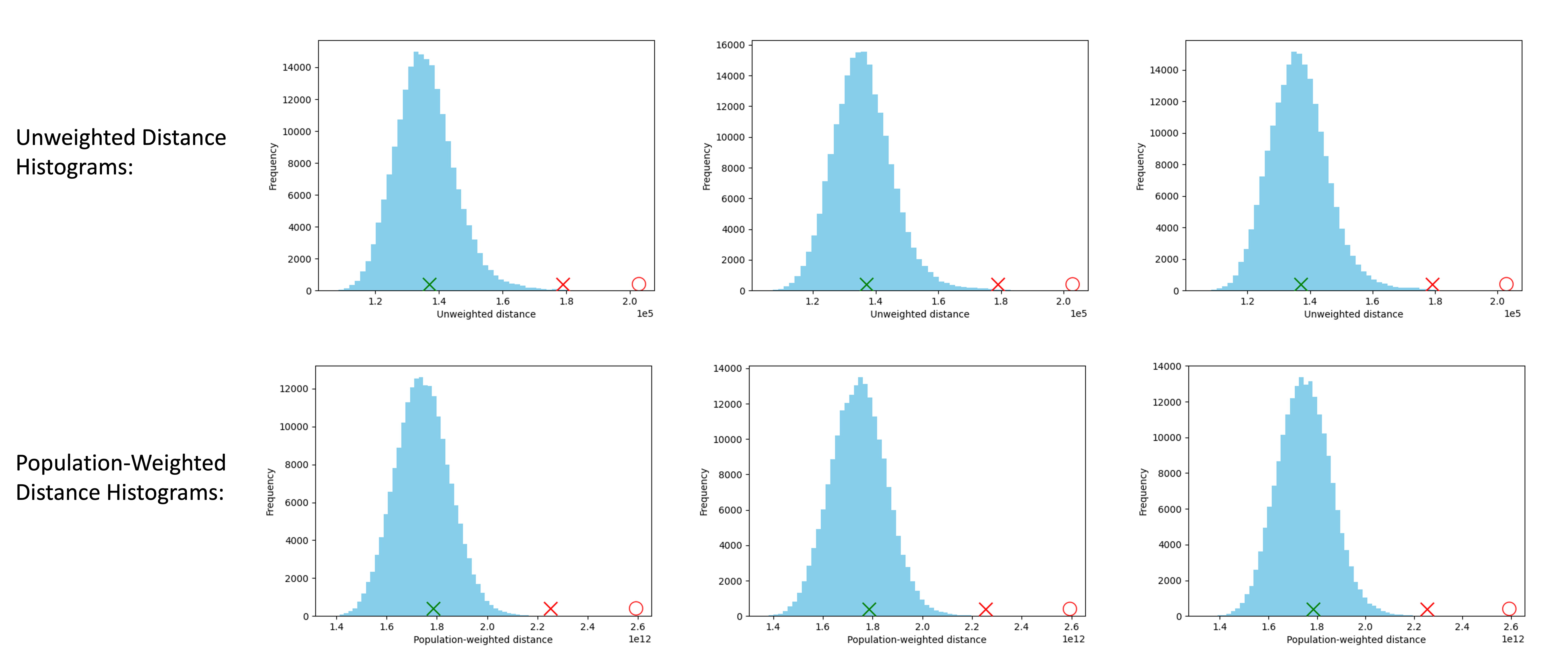}
  \caption{Distance histograms for NC, the distances of gerrymandered maps are indicated with red markers whereas the distances of the remedial maps are indicated with green markers. The $\textcolor{red}{\circ}$ and the \textcolor{red}{X} are for 2011 and 2016 enacted maps, respectively.}
  \label{fig:NC_all_histograms}
\end{figure*}

\begin{figure*}[h!]
  \centering
  \includegraphics[scale=0.28]{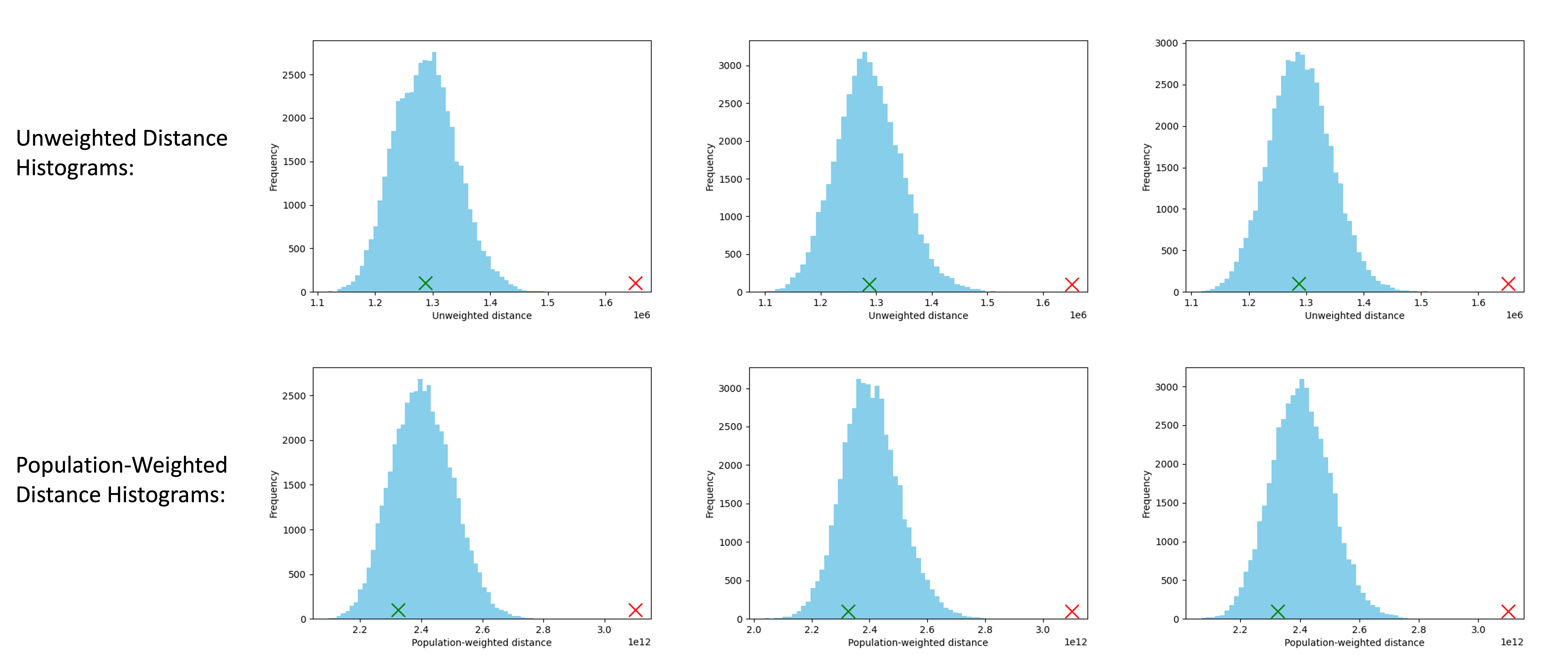}
  \caption{Distance histograms for PA, the distances of gerrymandered maps are indicated with red markers whereas the distances of the remedial maps are indicated with green markers..}
  \label{fig:PA_all_histograms}
\end{figure*}

\begin{figure*}[h!]
  \centering
  \includegraphics[scale=0.3]{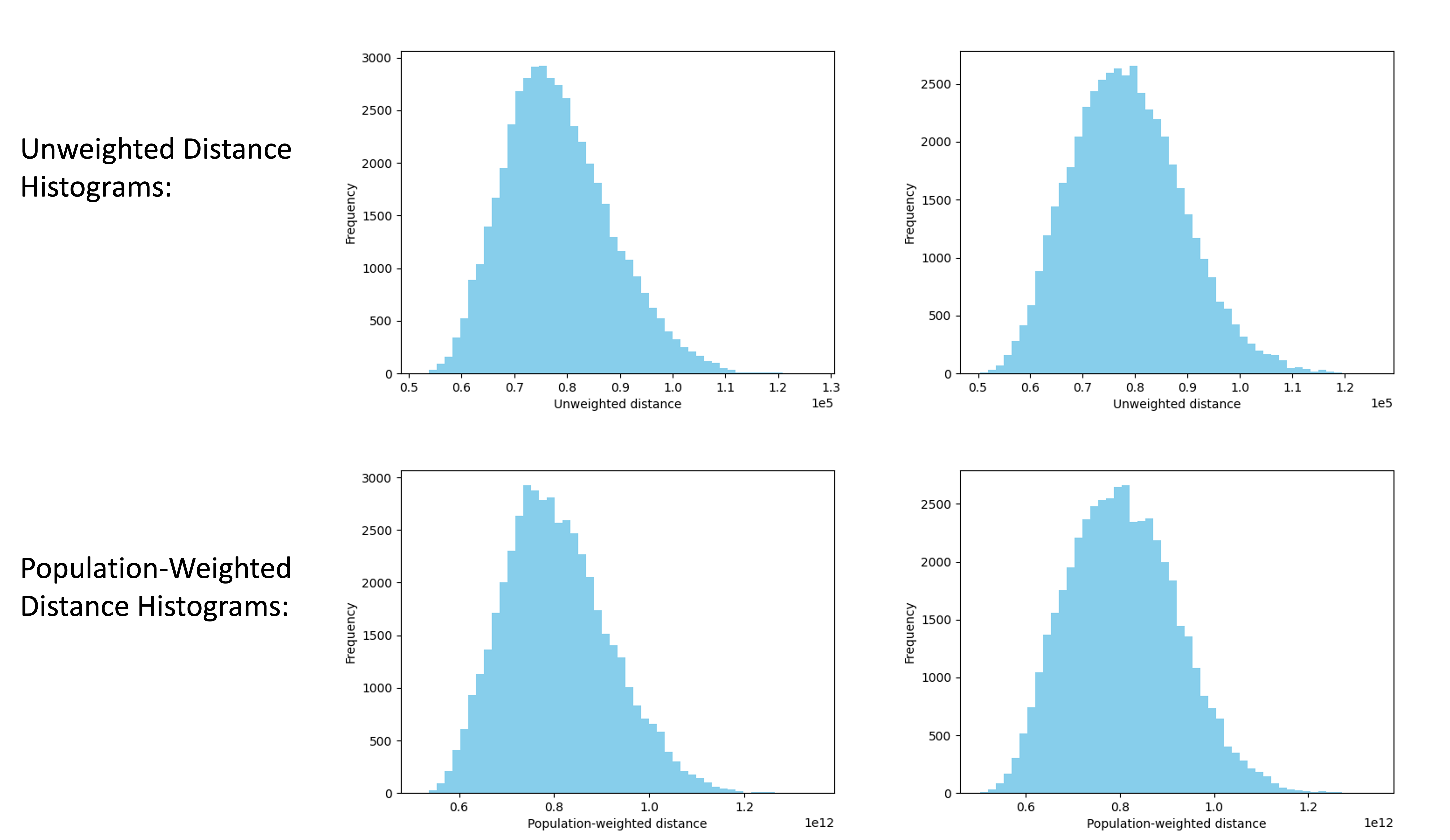}
  \caption{Distance histograms for MD.}
  \label{fig:MD_all_histograms}
\end{figure*}

\begin{table}[h!]
\caption{Percentile distances for special maps in NC using the unweighted distance measure.} \label{table:nc_unweighted_distances}
\begin{center}
\begin{tabular}{*{4}{c}}
\textbf{Starting Seed} & \textbf{Judges' map} & \textbf{2016 enacted map} & \textbf{2011 enacted map} \\
\hline \\
Seed 1         & 58.362 &                  99.942    &	 100.0\\
Seed 2         &  57.004 &         	       99.927     &	 100.0\\
Seed 3         & 54.558 &                  99.943    & 	100.0\\
\end{tabular}
\end{center}
\end{table}

\begin{table}[h!]
\caption{Percentile distances for special maps in NC using the weighted distance measure.} \label{table:nc_weighted_distances}
\begin{center}
\begin{tabular}{*{4}{c}}
\textbf{Starting Seed} & \textbf{Judges' map} & \textbf{2016 enacted map} & \textbf{2011 enacted map} \\
\hline \\
Seed 1         & 65.359 &                 99.999    &	 100.0\\
Seed 2         &  64.454 &         	     99.997     &	100.0\\
Seed 3         & 63.238&                  99.996    & 	100.0\\
\end{tabular}
\end{center}
\end{table}

\begin{table}[h!]
\caption{Percentile distances for special maps in PA using the unweighted distance measure.} \label{table:pa_unweighted_distances}
\begin{center}
\begin{tabular}{*{3}{c}}
\textbf{Starting Seed} & \textbf{Remedial map} & \textbf{2011 enacted map} \\
\hline \\
Seed 1         & 49.992 &           100.0    \\
Seed 2         &  51.875 &         	100.0  \\
Seed 3         &  48.715 &                  100.0    \\
\end{tabular}
\end{center}
\end{table}

\begin{table}[h!]
\caption{Percentile distances for special maps in PA using the weighted distance measure.} \label{table:pa_weighted_distances}
\begin{center}
\begin{tabular}{*{3}{c}}
\textbf{Starting Seed} & \textbf{Remedial map} & \textbf{2011 enacted map} \\
\hline \\
Seed 1         & 22.383 &            100.0    \\
Seed 2         &  20.956 &         	100.0    \\
Seed 3         & 21.813 &                 100.0    \\
\end{tabular}
\end{center}
\end{table}

\subsection{Finding the Medoid}
Similar to the previous experiments, we produce different medoids, one for each state, seed, and distance measure. Having chosen the state and distance measure (unweighted or population-weighted), then for a given centroid $c_i$ and its corresponding seed $s_i$, we start by sampling $T_1$ maps from the seed and pick the sampled map $M$ whose distance $\dthetasq(M,c_i)$ is the smallest. We refer to this map as the \emph{initial medoid}. Clearly, we would have 3 such maps for each state and distance measure. Having found the initial map, we start sampling again but this time starting from the initial map. However, we modify the transition in the $\recom$ chain. Specifically, if we are at a state (map) $M$, then we only transition to a new state (map) $M'$ if its closer to the centroid, i.e. $\dthetasq(M',c_i) < \dthetasq(M,c_i)$. Doing this for $T_2$ iterations, we obtain the \emph{final medoid}. Again we would have one final medoid map for each state, seed, and distance measure. We use $T_1=T_2=200,000$ for NC whereas for PA and MD we have $T_1=50,000$ and $T_2=15,000$. Empirically, we find that $T_1=50,000$ and $T_2=15,000$ are sufficient (see Table \ref{table:re_measure} and the rest of this section for more discussion).

\paragraph{Relative Error Measure:} We have seen in Subsection \ref{subsec:conv_centroid} that we have convergence in the centroid and therefore we assume that we are dealing with one centroid $\ac$ and that $\ac$ is a very good approximation of the population centroid $\acpop$. Since the medoid is supposed to be the closest valid map to the centroid, for any two given mediods $\mh{1}$ and $\mh{2}$ we calculate the relative error between them as follows: 
\begin{align*}
    \RE(M_1,M_2) = \frac{|\dthetasq(\mh{1},\ac)-\dthetasq(\mh{2},\ac)|}{\min\{\dthetasq(\mh{1},\ac),\dthetasq(\mh{2},\ac)\}}
\end{align*}
We use the relative error measure $\RE(.,.)$ to see how the final medoids approximate the medoid cost function relative to one another.

\paragraph{General Observations and Conclusion:} Here we note the observations we reach from the following subsections: (1) For a given state and seed, the initial medoids are the same regardless of distance used (unweighted or population-weighted). (2) For NC and MD, the final medoids are visually very similar, however for PA which is much larger we notice significant differences. (3) Medoid maps can lead to different election results even if they are visually very similar and very close in terms of distance $\dtheta$. 
(4) Medoid maps do not lead to rare election outcomes and the election outcome is at the peak of the election histogram or close to it. (5) The relative error measure $\RE(.,.)$ over final medoid is in general very small, at most equal to $3.87\%$. The fact that we find final medoids with small relative error but that are still different may be explained by the fact that the medoid (or approximate medoids) may not be unique. Regardless, we believe the final medoids maps we obtain are useful as starting maps that can be refined to draw final redistricting maps to be enacted.




\subsubsection{Initial Medoids and Final Medoids}
Figures \ref{fig:NC_all_medoids}, \ref{fig:PA_all_medoids}, and \ref{fig:MD_all_medoids} show the initial and final medoids for NC, PA, and MD, respectively. We note that in all figures, that maps in the first column are all using seed $s_1$ and centroid $c_1$, the second column are using seed $s_2$ and centroid $c_2$, and so on. Further, the initial medoid maps are found to be the same for both distances (unweighted or population-weighted). In general, we notice that the NC and MD final medoid maps are much similar to one another than PA as there are large regions in PA that are clearly assigned to different districts.    





\begin{figure}
  \centering
  \includegraphics[scale=0.38]{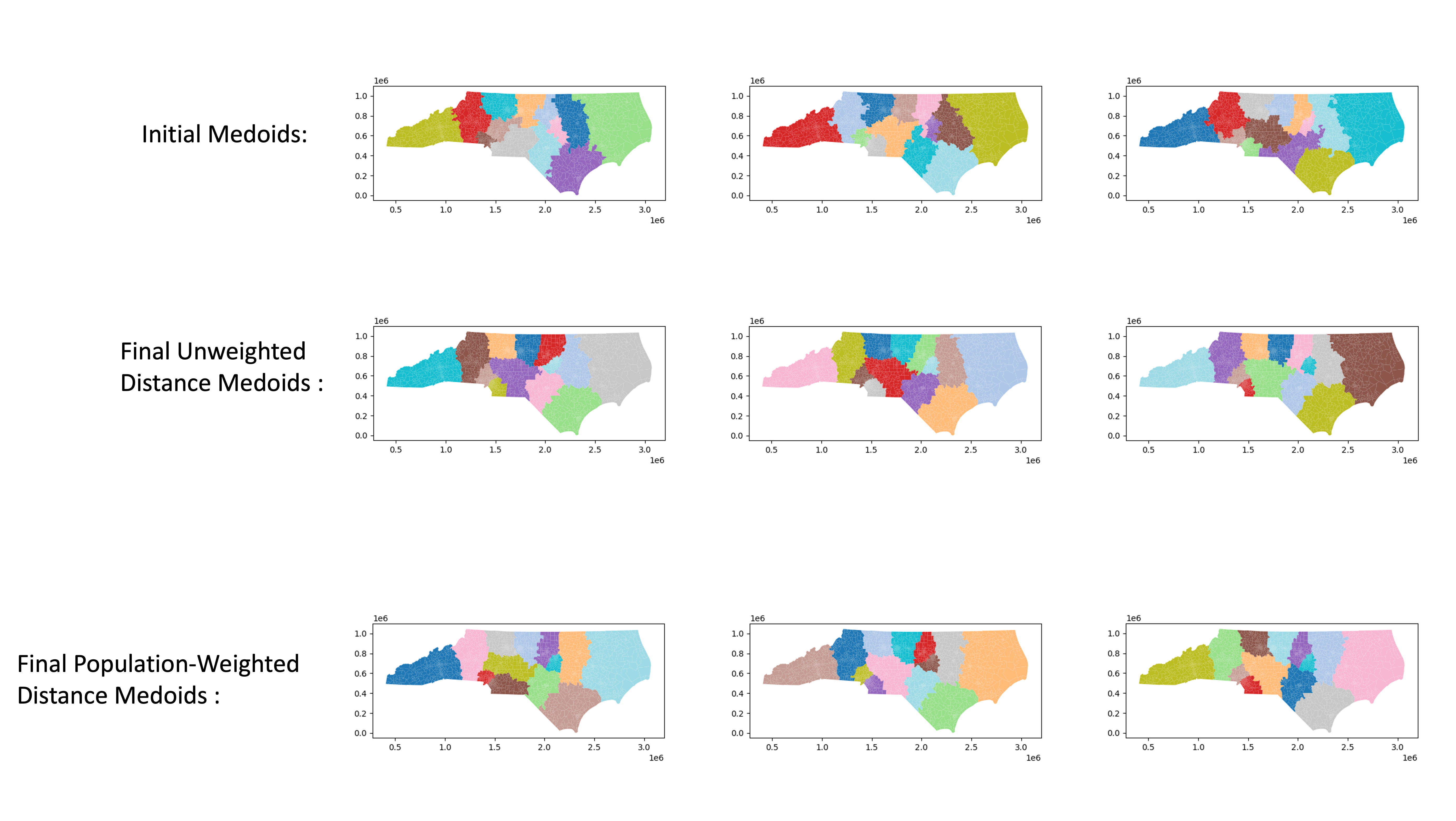}
  \caption{NC Medoids. (Top Row): Initial Medoids, (Middle Row): Final Unweighted Distance Medoids, and (Bottom Row): Final Population-Weighted Distance Medoids. Each column is for a specific seed and its associated centroid.}
  \label{fig:NC_all_medoids}
\end{figure}

\begin{figure}
  \centering
  \includegraphics[scale=0.38]{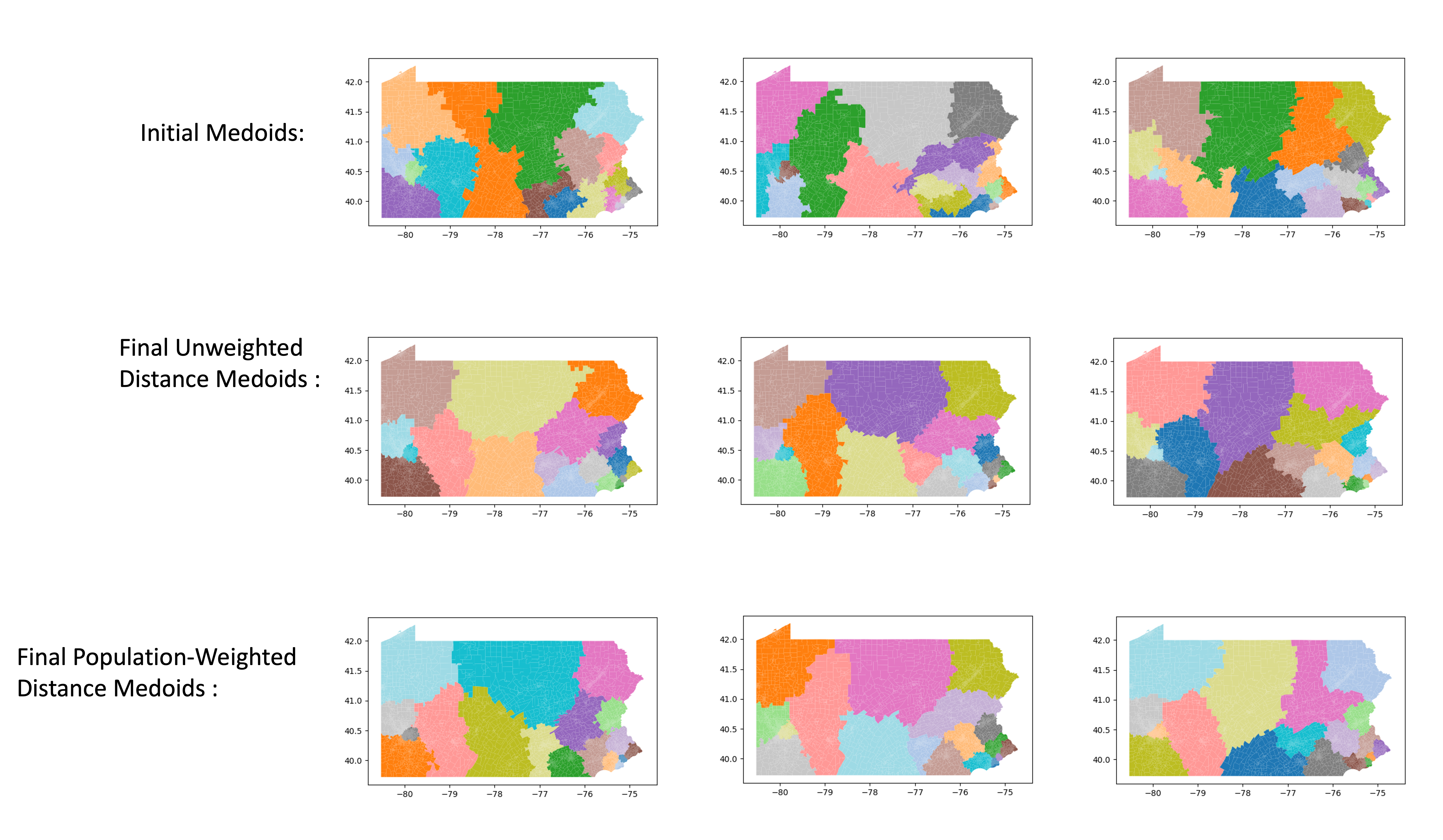}
  \caption{PA Medoids. (Top Row): Initial Medoids, (Middle Row): Final Unweighted Distance Medoids, and (Bottom Row): Final Population-Weighted Distance Medoids. Each column is for a specific seed and its associated centroid.}  \label{fig:PA_all_medoids}
\end{figure}

\begin{figure}
  \centering
  \includegraphics[scale=0.38]{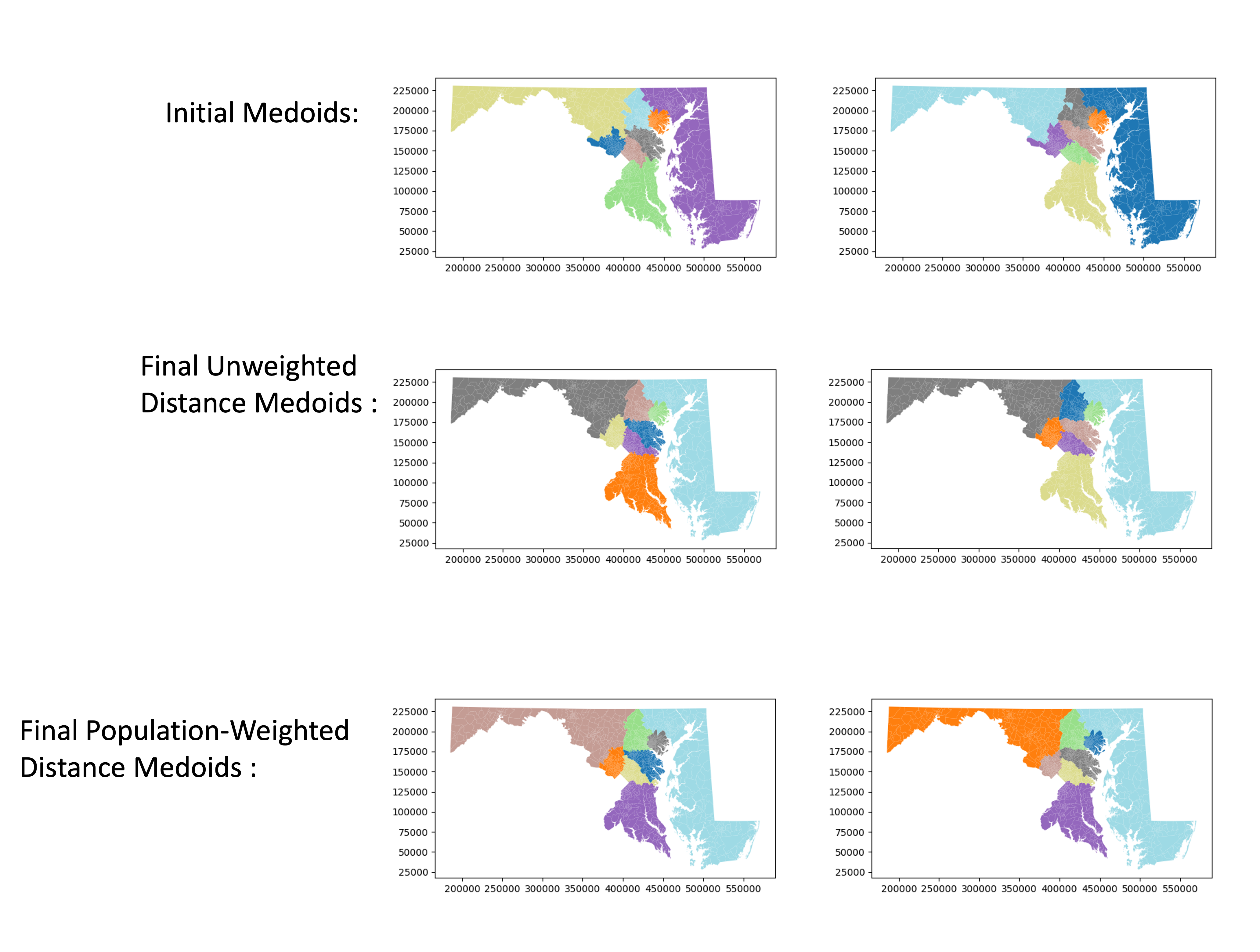}
  \caption{MD Medoids. (Top Row): Initial Medoids, (Middle Row): Final Unweighted Distance Medoids, and (Bottom Row): Final Population-Weighted Distance Medoids. Each column is for a specific seed and its associated centroid.}
  \label{fig:MD_all_medoids}
\end{figure}

\subsubsection{Relative Error $\RE(.,.)$ between the Final Medoids}
Since we have three final medoids for each state and distance measure, we report the maximum relative error value that can be obtained using any pair of maps. Table \ref{table:re_measure} shows the maximum relative error percentage value, clearly the largest relative error value is at most around $3.87\%$. 
\begin{table}[h!]
\caption{Maximum Relative Error $\RE$ between final medoids for each state and distance measure} \label{table:re_measure}
\begin{center}
\begin{tabular}{lll}
\textbf{STATE} & \textbf{Distance Measure} & \textbf{Max $\RE(.,)$ Percentage Value Over Final Medoid Pair} \\
\hline \\
NC         & Unweighted &                   \ \ \ \ \ \ \ \ \ \ \ \ $0.1701$   \\
NC         & Population-Weighted &          \ \ \ \ \ \ \ \ \ \ \ \ $3.2506$   \\
PA         & Unweighted &                   \ \ \ \ \ \ \ \ \ \ \ \ $1.3482$   \\
PA         & Population-Weighted &          \ \ \ \ \ \ \ \ \ \ \ \ $3.5210$   \\
MD         & Unweighted &                   \ \ \ \ \ \ \ \ \ \ \ \ $3.6167$   \\
MD         & Population-Weighted &          \ \ \ \ \ \ \ \ \ \ \ \ $3.8749$   \\
\end{tabular}
\end{center}
\end{table}


\subsubsection{Election Histograms and Medoid Election Results}
Here we show the election histogram for each state using votes for a specific election. First, note according to the distribution of votes in each district the number of seats won by a specific party can be calculated. Since we have two-party results, we show only the seats for one party (Democratic party). Finding election histograms using sampling methods is well-established \cite{deford2019recombination,herschlag2020quantifying} therefore we show only one histogram for each state, see Figures \ref{fig:NC_election_hist},\ref{fig:PA_election_hist}, and \ref{fig:MD_election_hist}. We note that the election histograms are the same independent of the choice of seed (as expected). Further, election histograms are not related to a centroid or a chosen distance measure. For each state and distance measure, we record the number of seats for the final medoid maps as shown in the election tables below. In general, medoid maps lead to election results that have high probability, the outcome with the highest probability or close to it. Moreover, we find that the election outcomes of different medoids can significantly overlap. 


\begin{figure}[h!]
  \centering
  \includegraphics[scale=0.4]{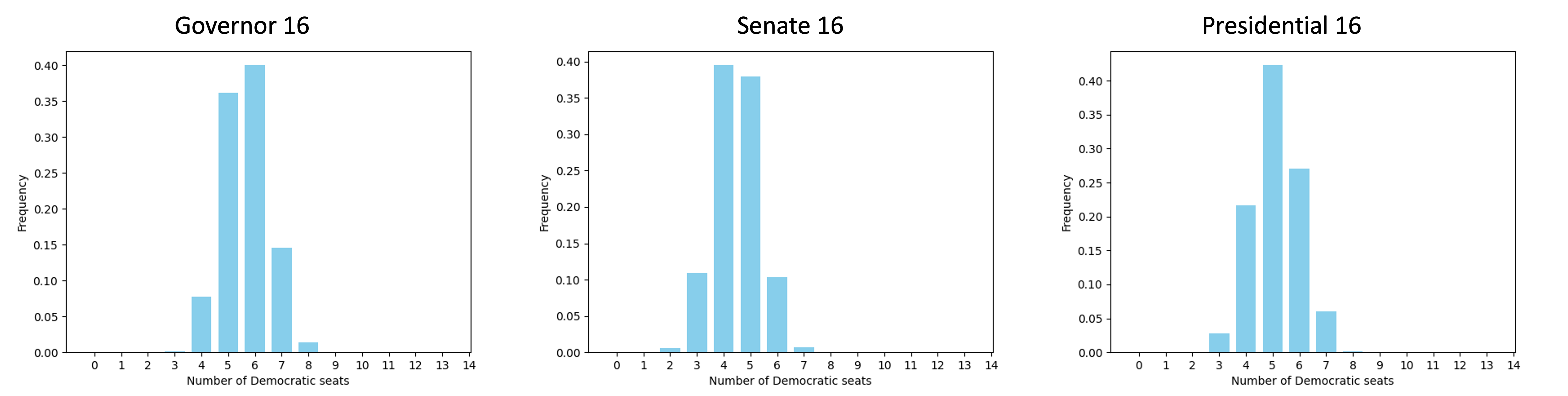}
  \caption{NC Election Histograms.}
  \label{fig:NC_election_hist}
\end{figure}

\begin{figure}[h!]
  \centering
  \includegraphics[scale=0.33]{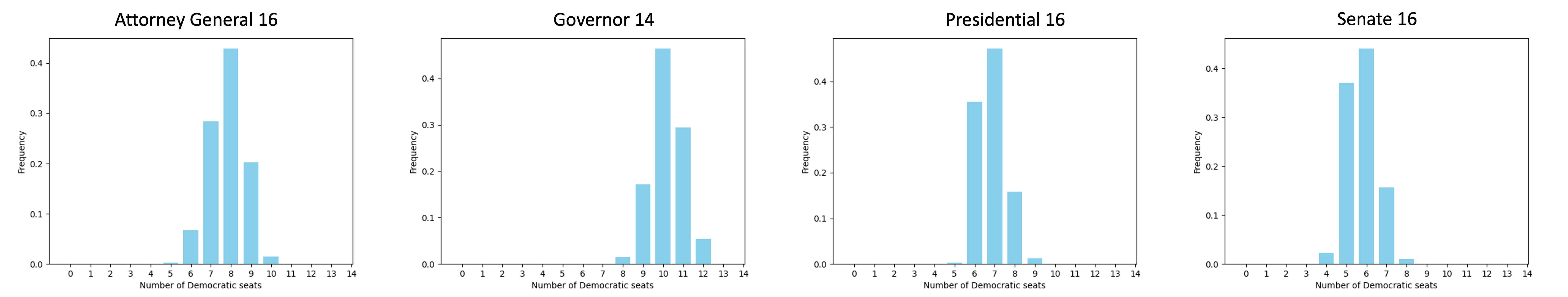}
  \caption{PA Election Histograms.}
  \label{fig:PA_election_hist}
\end{figure}

\begin{figure}[h!]
  \centering
  \includegraphics[scale=0.33]{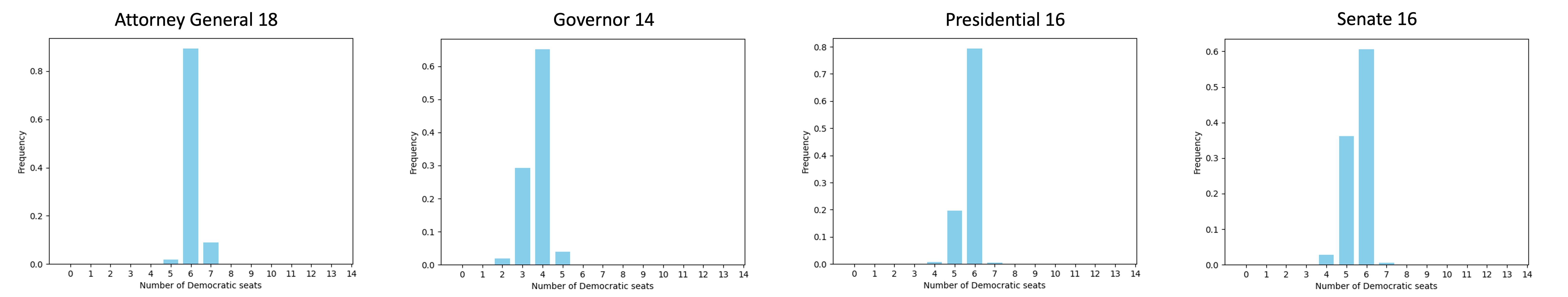}
  \caption{MD Election Histograms.}
  \label{fig:MD_election_hist}
\end{figure}

\begin{table}[h!]
\caption{Election results for the unweighted NC medoids.}
\begin{center}
\begin{tabular}{*{4}{c}}
\textbf{Starting Seed} & \textbf{Governor 16} & \textbf{Senate 16} & \textbf{Presidential 16} \\
\hline \\
Seed 1         & 5 &                 4    &	 5\\
Seed 2         &  6 &         	     4     &	5\\
Seed 3         & 6&                  4    & 	5\\
\end{tabular}
\end{center}
 \label{NC_un_elec)}
\end{table}

\begin{table}[h!]
\caption{Election results for the weighted NC medoids.} \label{table:NC_weighted_medoids_election)}
\begin{center}
\begin{tabular}{*{4}{c}}
\textbf{Starting Seed} & \textbf{Governor 16} & \textbf{Senate 16} & \textbf{Presidential 16} \\
\hline \\
Seed 1         & 5 &                 4    &	 4\\
Seed 2         &  6 &         	     4     &	5\\
Seed 3         & 6&                  4    & 	6\\
\end{tabular}
\end{center}
\end{table}

\begin{table}[h!]
\caption{Election results for the unweighted PA medoids.} \label{table:PA_unweighted_medoids_election)}
\begin{center}
\begin{tabular}{*{5}{c}}
\textbf{Starting Seed} & \textbf{Attorney General 16} & \textbf{Governor 14} & \textbf{Presidential 16} & \textbf{Senate 16} \\
\hline \\
Seed 1         & 8 &                 10    &	 6   &	 6\\
Seed 2         &  8 &         	     9     &	7 &	 6\\
Seed 3         & 7&                  10    & 	6 &	 6\\
\end{tabular}
\end{center}
\end{table}

\begin{table}[h!]
\caption{Election results for the weighted PA medoids.} \label{table:PA_weighted_medoids_election)}
\begin{center}
\begin{tabular}{*{5}{c}}
\textbf{Starting Seed} & \textbf{Attorney General 16} & \textbf{Governor 14} & \textbf{Presidential 16} & \textbf{Senate 16} \\
\hline \\
Seed 1         & 8 &                 9    &	 7   &	 5\\
Seed 2         &  8 &         	     9     &	7 &	 6\\
Seed 3         & 8&                  10    & 	6 &	 6\\
\end{tabular}
\end{center}
\end{table}

\begin{table}[h!]
\caption{Election results for the unweighted MD medoids.} \label{table:MD_unweighted_medoids_election)}
\begin{center}
\begin{tabular}{*{5}{c}}
\textbf{Starting Seed} & \textbf{Attorney General 18} & \textbf{Governor 14} & \textbf{Presidential 16} & \textbf{Senate 16} \\
\hline \\
Seed 1         & 6 &                 4    &	 6   &	 5\\
Seed 2         &  6 &         	     4     &	6 &	 6\\
\end{tabular}
\end{center}
\end{table}

\begin{table}[h!]
\caption{Election results for the weighted MD medoids.} \label{table:MD_weighted_medoids_election)}
\begin{center}
\begin{tabular}{*{5}{c}}
\textbf{Starting Seed} & \textbf{Attorney General 18} & \textbf{Governor 14} & \textbf{Presidential 16} & \textbf{Senate 16} \\
\hline \\
Seed 1         & 6 &                 4    &	 6   &	 5\\
Seed 2         &  6 &         	     4     &	6 &	 6\\
\end{tabular}
\end{center}
\end{table}

\clearpage
\section{Further Details on Gerrymandering Detection} \label{app:gerry_detect}
\subsection{Gerrymandering Interpretation of Distance Outlier Maps}
What does it mean for a map $\ag$ to a have very large distance $\dthetasq(\ag,\ac)$ from the centroid? Here, we show two interesting and mathematically justified interpretations which support the claim that such a map $A_G$ should be considered gerrymandered. For ease of representation, we assume that we are in the asymptotic regime, i.e. $\ac=\acpop$:  
\paragraph{(1) $\ag$ is Highly Dissimilar from the Ensemble.} 
We show that the distance from the centroid is a direct measure of the average distance from the ensemble of maps. This follows by direct manipulation of the decomposition theorem \ref{th:decomp_th}: 
\begin{align}
\dthetasq(\ag,\ac) = \underbrace{\frac{1}{T} \sum_{t=1}^T \dtheta(\ag,A_t)}_{\text{Average distance between map $\ag$ and the ensemble}} -  \underbrace{\frac{1}{T}\sum_{t=1}^T \dthetasq(A_t,\ac)}_{\text{Constant independent of the map $\ag$}}
\end{align} 
Therefore, the above equation shows that a map $\ag$ with a large distance from the centroid has to be highly dissimilar from the ensemble. This gives further justification for why maps that are outliers from the centroid such as those  of NC and PA (marked in the histograms of Figures \ref{fig:NC_all_histograms} and \ref{fig:PA_all_histograms}) should be considered to be gerrymandered. 
It also helps explain why maps which are outliers in terms of partisan election results would also tend to be outliers according to our non-partisan metric. Intuitively, maps which produce rare election results with respect to the ensemble should have rare structure with respect to the ensemble. Ideally, the metric we propose is then able to detect abnormal maps which might disadvantage groups beyond those harmed by the typical partisan or racial gerrymandering we know to look for. 

Importantly, we note a \emph{significant computational advantage}, since in general finding the average distance of a map $\ag$ from $T$ sampled maps would require $\Omega(T)$ time, but with the above equation we only need to find the distance between the map $\ag$ and the centroid.

\paragraph{(2) $\ag$ Separates Same District Vertices and Unites Separated Vertices.} Proposition \ref{th:acpop_prop} shows that $\acpop(i,j)=\Pr[i \text{ and } j \text{ in the same district}]$, i.e., the probability that vertices $i$ and $j$ are in the same district equals $\acpop(i,j)$. Therefore consider the unweighted $\dthetasq$ distance, i.e. $d_2(\ag,\ac) = \frac{1}{2} \sum_{i,j\in V} (\ag(i,j)-\ac(i,j))^2$. It follows that if $d_2(\ag,\ac)$ is abnormally large, then $\ag$ is following a different distribution where on average if vertices $i$ and $j$ were usually in the same district (i.e., $\Pr[i \text{ and } j \text{ in the same district}] > 0.5$), they are placed in different districts and if $i$ and $j$ were usually separated ( i.e., $\Pr[i \text{ and } j \text{ in the same district}] < 0.5$), they are placed in the same district. A similar interpretation follows for weighted distances, but with the weight $\theta(i,j)$ taken into account, i.e., higher cost if $\theta(i,j)$ is larger.


\subsection{Comparison to Previous Work on Detecting Gerrymandering}\label{app:comparison_Abri}
With the exception of the work of \citet{abrishami2020geometry}, the methods that were introduced such as \citet{herschlag2020quantifying,Chikina2017,ko2022all,lin2022auditing,deford2019recombination} to detect gerrymandering have all been based on election outcomes and therefore are orthogonal to our work. Now, we provide more details about \citet{abrishami2020geometry}. The work of \citet{abrishami2020geometry} introduces a distance measure over redistricting, but finding the distance between two maps requires solving a linear program which is significantly more computationally expensive then our distance measure. Further, \citet{abrishami2020geometry} did not define a medoid or centroid map. Moreover, their work only considered the 2011 and 2016 gerrymandered maps of North Carolina unlike our work which also considered the 2011 gerrymandered instance of Pennsylvania. More importantly, the method used in \citet{abrishami2020geometry} to show that the 2011 and 2016 North Carolina maps are gerrymandered amounts to sampling 100 maps and then applying multidimensional scaling (MDS) a classical method for visualizing high dimensional data in lower dimensional space \cite{borg2013mds,mead1992review}. Figure \ref{fig:Abrishami_fig} is from \cite{abrishami2020geometry} and shows the visualization. Note that the 2012 map in the figure is actually the same as the 2011 map we use (the map was enacted in 2011 but used in 2012). One can see that the gerrymandered maps in red lie further away unlike the judges' map. Based on the above discussion one can identify the following limitations in \cite{abrishami2020geometry}: 
\begin{enumerate}
    \item \textbf{Small Sample Size:} In previous work in redistricting such as \cite{Chikina2017,deford2019recombination,herschlag2020quantifying} we have seen tens and even hundreds of thousands of maps used to estimate a quantity. However, \cite{abrishami2020geometry} only uses 100 maps which is a significant limitation that could put into question the statistical accuracy.  This is an outcome of their distance measure which is computationally expensive.  
    \item \textbf{Heuristic Method with No Outlier Score:} \cite{abrishami2020geometry} uses a high dimensional visualization method (MDS). One necessarily loses information when projecting to a lower dimensional space and the final visualization is not considered to give a necessarily accurate view of the data in higher dimensional space. Furthermore, no outlier score is given for the gerrymandered maps, but rather the observation that they appear further away. Not having a clear numerical value is a clear deficiency in this method. Moreover, in Figure \ref{fig:Abrishami_fig} there exists a large number of ordinary maps sampled from $\recom$ that are comparatively as far away at the  exterior of the visualization as the red gerrymandered maps of North Carolina. 
\end{enumerate}

Our framework resolves both issues above. It can use as much as $200{,}000$ maps or more and gives an outlier score in terms of percentile distance. 

\begin{figure}[h!]
  \centering
  \includegraphics[scale=0.2]{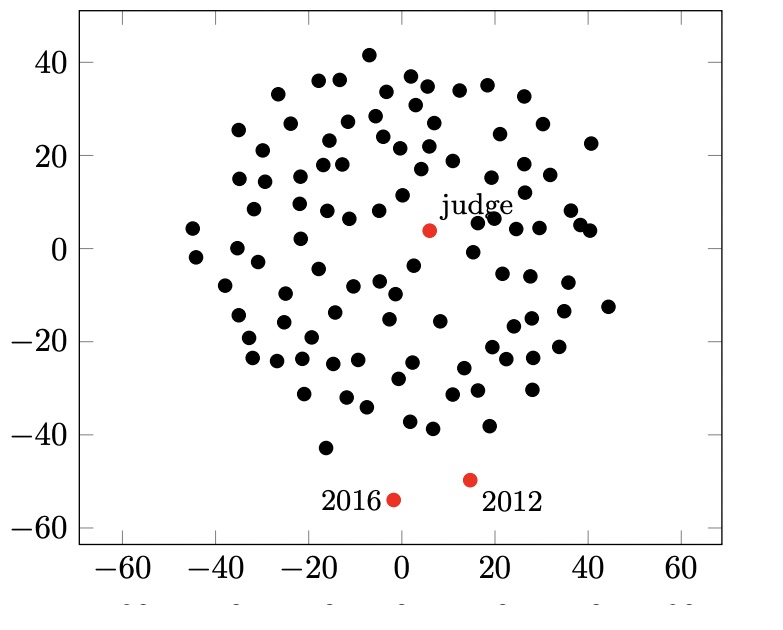}
  \caption{Figure from \cite{abrishami2020geometry} shown as Figure 15 (b) in the reference. The black points are sampled using $\recom$.}
  \label{fig:Abrishami_fig}
\end{figure}

\paragraph{Limitation of Our Method:} Our method is a distance (dissimilarity) based method. As we have seen, it resolves significant issues and introduces a rigorous and statistically stable framework to detect maps that are outlier in terms of distance. However, this may or may not be an advantage since a gerrymandering legislature may select a map that unfairly favors it that is close to the ensemble in terms of distance \cite{duchin2021political}. In general, there is no clean cut approach to detect gerrymandering in any instance. In fact, even using election outcomes it is not in general possible to detect a gerrymandered map. For example, consider the case where the election outcomes are uniformly distributed, this would imply that all outcomes are equally rare.

\end{document}